\newcommand{\argmin}[1]{\underset{#1}{\operatorname{argmin}}}
\newcommand{\E}[1]{\mathrm{E}[ #1 ]}
\newcommand{\cE}[2]{\mathrm{E}[ #1 \mid #2 ]}
\newcommand{\PP}[1]{\mathrm{P}\{ #1 \}}
\newcommand{\bz}{\mathbf{z}}
\newcommand{\bZ}{\tilde{\mathbf{z}}}
\newcommand{\cost}{\omega}
\newcommand{\poli}[1]{#1}
\newcommand{\dps}{\displaystyle}
\newcommand{\hc}{b}  %
\newcommand{\HC}{B}  %
\newcommand{\ST}{T}  %
\newcommand{\JS}{X}  %
\newtheorem{theorem}{Theorem} %
\newtheorem{lemma}{Lemma}
\newtheorem{prop}{Proposition}
\newtheorem{corollary}{Corollary}
\newtheorem{defn}{Definition}
\begin{document}

\IEEEoverridecommandlockouts

\title{Minimizing Slowdown in Heterogeneous\\
  Size-Aware Dispatching Systems (full version)$^{\dagger}$\thanks{$^{\dagger}$ 
    This is the full version (including the appendix) of the paper with the same title that appears in the ACM SIGMETRICS 2012, London, UK.}}

\author{
  \IEEEauthorblockN{Esa Hyyti{\"a}, Samuli Aalto and Aleksi Penttinen}
  \IEEEauthorblockA{Department of Communications and Networking\\Aalto University School of Electrical Engineering, Finland}}
\maketitle

\begin{abstract}
We consider a 
system of parallel queues where
tasks are assigned (dispatched) to one of the available servers upon arrival.
The dispatching decision is based on the full state information, i.e.,
on the sizes of the new and existing jobs.
We are interested in minimizing the so-called mean slowdown criterion
corresponding to the mean of the sojourn time divided by the processing time.
Assuming no new jobs arrive, 
the shortest-processing-time-product (SPTP) schedule is known to minimize the
slowdown of the existing jobs.
The main contribution of this paper is three-fold:
1) To show the optimality of SPTP with respect to slowdown 
in a single server queue \emph{under Poisson arrivals}; 
2) to derive the so-called size-aware value functions for
M/G/1-FIFO/LIFO/SPTP/SPT/SRPT
\emph{with general holding costs}
of which the slowdown criterion is a special case; and
3) to utilize the value functions to derive efficient dispatching policies
so as to minimize the mean slowdown in a heterogeneous server system.
The derived %
policies %
offer a significantly better
performance than 
e.g., the size-aware-task-assignment with equal load (SITA-E) and least-work-left (LWL) policies.
\end{abstract}

\section{Introduction}

Dispatching problems arise in many contexts such as manufacturing sites,
web server farms, super computing systems, and other parallel server systems.
In a dispatching system jobs are assigned upon arrival to one of the several queues
as illustrated in Fig.~\ref{fig:dispatch-m}.
Such systems involve two decisions:
(i) \emph{dispatching policy} $\alpha$ chooses the server, and
(ii) \emph{scheduling discipline} the order in which the jobs are served. %
The dispatching decisions are irrevocable, 
i.e., it is not possible to move a job to another queue afterwards.
In the literature, the dispatching problems and their solutions 
differ with respect to 
  (i) optimization objective,
 (ii) available information, and %
(iii) scheduling discipline used in the servers.

Although the literature has generally addressed the mean sojourn time (i.e., response time) %
as the optimization objective,  also other performance metrics can be relevant. %
One such metric is the \emph{slowdown}\footnote{Yang and de Veciana refer to
the slowdown as the bit-transmission delay (BTD)
\cite{yang-infocom-2002}.} %
of a job, defined as the ratio of the sojourn time
and the processing time (service requirement) \cite{harchol-balter-peva-2002,feng-sigmetrics-2003,aalto-sigmetrics-2007}.
The slowdown criterion combines
\emph{efficiency} and \emph{fairness} 
and stems from the idea that
\emph{longer jobs can tolerate longer sojourn time}.
The optimal scheduling discipline in this respect for a single server queue
and a fixed number of jobs is %
the so-called \emph{shortest-processing-time-product}%
\footnote{Wierman et al.\ refer to SPTP as the RS policy,
  a product of the remaining size and the original size \cite{wierman-sigmetrics-2005}.}
(SPTP) discipline \cite{yang-infocom-2002},
where the index of a job is the product of the initial and remaining service requirements
and the job with the smallest index is served first.
With aid of Gittins index,
we show that SPTP minimizes
the mean slowdown in single server queues also in the dynamic case of Poisson arrivals,
i.e., 
it is the optimal discipline for an M/G/1 queue with respect to %
the %
slowdown.

\begin{figure}
\centering
\includegraphics[height=23mm]{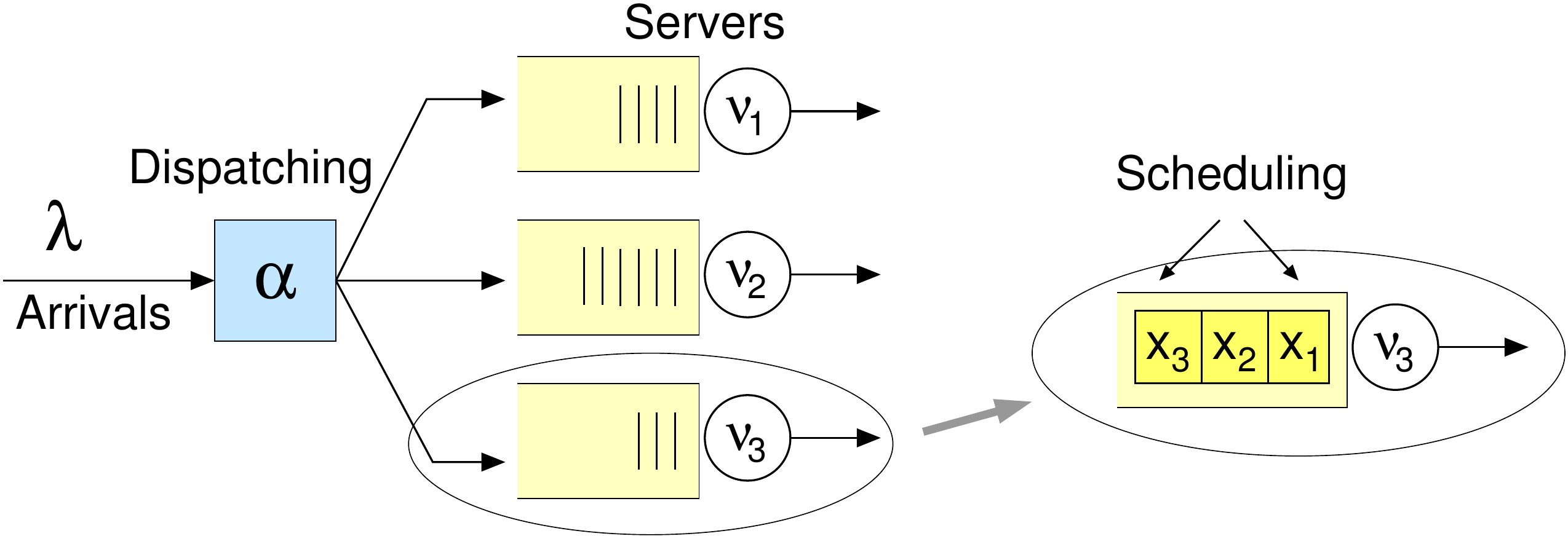}
\caption{A dispatching system with $m=3$ servers.}
\label{fig:dispatch-m}
\end{figure}

Then we %
derive value functions
with respect to \emph{arbitrary job specific holding costs} for an M/G/1 with 
the first-in-first-out (FIFO),
the last-in-first-out (LIFO), 
the shortest-processing-time (SPT), the shortest-remaining-processing-time (SRPT)
and SPTP scheduling disciplines,
where the last three are size-aware. %
A value function essentially characterizes the %
queue state in terms of
the expected %
costs in infinite time horizon for a fixed scheduling discipline.
In this respect, our work %
generalizes the results of \cite{hyytia-ejor-2012} to arbitrary
job specific holding cost rates, 
and includes also the analysis of the slowdown specific SPTP 
scheduling discipline.

Finally, we apply the derived value functions %
to the dispatching problem so as to minimize the mean slowdown
when the scheduling disciplines in each queue are fixed.
We assume that the dispatcher is fully aware of the state of the system, i.e.,
of the tasks in each queue and their remaining service requirements.
By starting with an arbitrary state-independent policy,
we carry out the first-policy-iteration (FPI) step of the Markov
decision processes (MDP) %
and obtain efficient %
state-dependent dispatching policies.

The rest of the paper is organized as follows.
In Section~\ref{sec:mg1-slow}, we consider a single M/G/1 queue with respect to slowdown criterion, %
and prove the optimality of SPTP. %
In Section~\ref{sec:values}, we derive the size-aware value functions with respect to
arbitrary job specific holding cost rates for FIFO, LIFO %
and SPTP (SPT and SRPT are given in the Appendix).
The single queue scheduling related results are utilized in Section~\ref{sec:dispatching}
to derive efficient dispatching policies, which are then
evaluated in Section \ref{sec:examples}.
Section~\ref{sec:conclusions} concludes the paper.

\subsection{Related Work}
\label{sect:related}
FIFO is perhaps the most common scheduling %
discipline
due to its nature and ease of implementation.
Other common %
disciplines are LIFO, SPT and SRPT.
As for the objective, minimization of the mean sojourn time has been a popular choice.
Indeed, SPT and SRPT, respectively,
are the optimal non-preemptive and preemptive schedules with this respect \cite{schrage-or-1968}.
For a recent survey on fairness and other scheduling objectives 
in a single server queue we refer to \cite{wierman-or-survey-2011}.

In the context of dispatching problems, 
FIFO has been studied extensively in the literature since
the early work by Winston \cite{winston-applied-1977},
Ephremides et al.\ \cite{ephremides-tac-1980}, and others.
Often the number of tasks per server is assumed to be known, cf., e.g., 
\emph{join-the-shortest-queue} (JSQ) dispatching policy \cite{winston-applied-1977}.
Even though FIFO queues have received the most of the attention,
also other scheduling disciplines have been studied.
For example, 
Gupta et.\ al consider JSQ with 
\emph{processor-sharing} (PS) scheduling discipline in \cite{gupta-peva-2007}.

Only a few optimality results are known for the dispatching problems.
Assuming exponentially distributed interarrival times and job sizes,
\cite{winston-applied-1977}
shows that JSQ with FIFO minimizes the mean waiting time
when the number in each queue is available. %
Also \cite{ephremides-tac-1980}
argues for the optimality of JSQ/FIFO when the number in each queue is available, 
while the Round-Robin (RR), followed by FIFO,
is shown to be the optimal policy when 
it is only known
that the queues were initially in the same state.
\cite{liu-applied-1994} proves that RR/FIFO is optimal 
with the absence of queue length information 
if the job sizes have a non-decreasing hazard function.
The RR results were later generalized in \cite{liu-or-1998}.
Whitt \cite{whitt-or-1986}, on the other hand, provides several
counterexamples where JSQ/FIFO policy fails.
Crovella et al.\cite{crovella-sigmetrics-1998}
and Harchol-Balter et al.\ \cite{harchol-balter-pdc-1999}
assume that the dispatcher is aware
of the size of a new job, but not of the state of the FIFO queues,
and propose policies based 
on job size intervals (e.g., short jobs to one queue, and the rest to another).
Feng et al.\ \cite{feng-peva-2005} later
showed that such a policy is the optimal
size-aware state-independent dispatching policy for homogeneous servers.

Also the MDP framework lends itself to dispatching problems.
Krishnan \cite{krishnan-ieee-ac-1990} has utilized it 
in the context of parallel M/M/s-FIFO servers 
so as to minimize the mean sojourn time, similarly as
Aalto and Virtamo \cite{aalto-nts13-1996} for the traditional M/M/1 queue.
Recently, %
FIFO, LIFO, SPT and SRPT queues %
were analyzed in \cite{hyytia-ejor-2012} with a general service time distribution.
Similarly, PS is considered in \cite{hyytia-peva-2011,hyytia-itc-2011}.
The key idea with the above %
work is to start by an arbitrary state-independent policy,
and then carry out FPI step utilizing 
the value functions (relative values of states).

\section{M/G/1 Queue and Slowdown}
\label{sec:mg1-slow}

In this section we
consider a single M/G/1 queue with arrival rate $\lambda$.
The service requirements
are i.i.d.\ random variables $\JS_i\sim \JS$ with a general distribution.

We define the \emph{slowdown} of a job as a ratio of the sojourn time $\ST$
to the service requirement $\JS$,
$$
\gamma \triangleq \frac{\ST}{\JS},
$$
where one is generally interested in its mean value $\E{\gamma}$.
Also other similar definitions exist.
In the context of distributed computing,
Harchol-Balter defines the slowdown as the ``wall-time'' divided by the CPU-time
in \cite{harchol-balter-cs-1997}.
Another common convention is to consider the ratio of the waiting time
to the processing time (see, e.g., \cite{harchol-balter-jacm-2002}),
which is a well-defined quantity for non-preemptive systems.
These different definitions, however, are essentially the same.

For non-preemptive work conserving policies, the
sojourn time in queue comprises the initial waiting time $W$ and
the consequent processing time $\JS$.
Thus, the mean slowdown is
$$
\E{\gamma} = \E{\frac{W+\JS}{\JS}} = 1 + \E{W/\JS}.
$$
Waiting time with FIFO is independent of the job size $\JS$,
and with aid of 
Pollaczek-Khinchin formula
one obtains \cite{harchol-balter-jacm-2002},
\begin{equation}\label{eq:fifo-slowdown}
\E{\gamma} 
= 1 + \E{W}\cdot\E{\JS^{-1}}
= 1 + \frac{\lambda\,\E{\JS^2}}{2(1-\rho)}\cdot\E{\JS^{-1}},
\end{equation}
which %
underlines the fact that the mean slowdown 
may be infinite for a stable queue as
$\E{\JS^{-1}}=\infty$ for many common job size distributions,
e.g., for an exponential distribution. %

The mean slowdown in an M/G/1 queue with preemptive LIFO and PS disciplines 
is a constant for all job sizes,
\begin{equation}\label{eq:lifo-slowdown}
 \cE{\gamma}{\JS=x} = \frac{1}{1-\rho}.
\end{equation}
Comparison of \eqref{eq:fifo-slowdown} and \eqref{eq:lifo-slowdown} immediately gives:
\begin{corollary}
\label{cor:fifo-lifo-slowdown}
FIFO is a better scheduling discipline than LIFO in an M/G/1 queue 
with respect to slowdown iff
\begin{equation}\label{eq:fifo-vs-lifo}
2\,\E{\JS} > \E{\JS^2} \cdot \E{\JS^{-1}}.
\end{equation}
\end{corollary}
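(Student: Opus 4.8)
The plan is to obtain the claim by a direct comparison of the two closed-form expressions \eqref{eq:fifo-slowdown} and \eqref{eq:lifo-slowdown} already established for the mean slowdown. First I would note that, since the right-hand side of \eqref{eq:lifo-slowdown} does not depend on $x$, the unconditional mean slowdown under preemptive LIFO is simply $\E{\gamma}_{\mathrm{LIFO}} = 1/(1-\rho)$, so no extra averaging over the job-size distribution is needed on the LIFO side.

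Next I would write out the inequality that expresses ``FIFO is better than LIFO with respect to slowdown'', namely $\E{\gamma}_{\mathrm{FIFO}} < \E{\gamma}_{\mathrm{LIFO}}$, i.e.
\[
1 + \frac{\lambda\,\E{\JS^2}}{2(1-\rho)}\cdot\E{\JS^{-1}} \;<\; \frac{1}{1-\rho}.
\]
Multiplying through by the positive factor $1-\rho$ and rearranging gives $\tfrac{\lambda}{2}\,\E{\JS^2}\,\E{\JS^{-1}} < \rho$. Substituting $\rho = \lambda\,\E{\JS}$ and cancelling the common factor $\lambda>0$ then yields exactly \eqref{eq:fifo-vs-lifo}. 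Since every step is an equivalence, this also proves the converse direction, establishing the ``iff''.

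There is no real obstacle here; the only point worth a remark is the degenerate regime where $\E{\JS^{-1}}=\infty$ (e.g., an exponential job-size distribution), in which case \eqref{eq:fifo-slowdown} gives $\E{\gamma}_{\mathrm{FIFO}}=\infty$ while $\E{\gamma}_{\mathrm{LIFO}}=1/(1-\rho)<\infty$; this is consistent with \eqref{eq:fifo-vs-lifo} being violated (its right-hand side is then infinite, so the strict inequality fails), so the statement remains correct with the natural conventions. I would include one sentence to that effect for completeness.
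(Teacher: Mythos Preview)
Your proposal is correct and follows exactly the approach the paper takes: the corollary is stated as an immediate consequence of comparing \eqref{eq:fifo-slowdown} and \eqref{eq:lifo-slowdown}, and your algebraic manipulation (multiply by $1-\rho$, use $\rho=\lambda\,\E{\JS}$, cancel $\lambda$) is the obvious and only way to carry that comparison through. Your added remark on the $\E{\JS^{-1}}=\infty$ case is a nice clarification that the paper does not make explicit.
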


\subsection{Shortest-Processing-Time-Product (SPTP)}

In \cite{yang-infocom-2002}, Yang and de Veciana introduced the 
\emph{shortest-processing-time-product} (SPTP)
scheduling discipline,
which will serve the job $i^*$ such that 
\[
  i^* = \arg\min_i \Delta^*_i \Delta_i,
\]
where $\Delta_i^*$ and $\Delta_i$ denote, respectively, the initial and remaining
service requirement of job $i$.
Yang and de Veciana
were able to show that SPTP is optimal with respect to the mean slowdown 
$\E{\gamma}$ in a transient system where all jobs are available at time 0 and 
no new jobs arrive thereafter (i.e., a \emph{myopic} approach).
SPTP can be seen as the counterpart of the SRPT\footnote{%
Note that defining an index policy with indices $\Delta_i$, $\Delta_i^*$ and $\sqrt{\Delta_i\Delta_i^*}$
gives SRPT, SPT and SPTP, respectively, where $\sqrt{\Delta_i\Delta_i^*}$ corresponds to the geometric mean
of $\Delta_i$ and $\Delta_i^*$.}
when instead of minimizing the mean sojourn time,
one is interested in minimizing the mean slowdown.

We argue below that SPTP is the optimal scheduling discipline also in 
the corresponding dynamic system with Poisson arrivals. %
That is,
we show that SPTP is optimal with respect to the %
slowdown %
in the M/G/1 queue. 
Our proof is based on the Gittins index approach.

\subsection{Gittins Index}
\label{sec:gittins}

Consider an M/G/1 queue with {\em multiple} job classes $k$, $k = 1,\ldots,K$. 
Let $\lambda_k$ denote the class-$k$ arrival rate, and $p_k = \lambda_k/\lambda$ 
the probability that an arriving job belongs to class $k$. In addition, let $\JS_k$ 
denote the generic service time related to class $k$. The total load is denoted 
by $\rho = \lambda_1 \E{\JS_1} + \ldots + \lambda_K \E{\JS_K}$. We assume that $\rho < 1$. 
The {\em Gittins index} \cite{gittins-1989,aalto-queueing-2009} 
for a class-$k$ job with attained service $a$ 
is defined by 
\[
  G_k(a) = 
  \sup_{\delta > 0} 
  w_k \, \frac{\PP{\JS_k - a \le \delta \mid \JS_k > a}}{\cE{\min\{\JS_k - a, \delta \} }{\JS_k > a}}, 
\]
where $w_k$ is the holding cost rate related to class $k$. Note that, 
in addition to the attained service $a$, the Gittins index is based on the 
(class-specific) distribution of the service time, but not on the service time 
itself. The {\em Gittins index policy} will serve the job $i^*$ such that 
\[
  i^* = \arg\max_i G_{k_i}(a_i), 
\]
where $k_i$ refers to the class and $a_i$ to the attained service of job $i$.
Let $\ST_k$ denote the sojourn time of a class-$k$ job. We recall the following 
optimality result proved 
by Gittins \cite[Theorem~3.28]{gittins-1989}.

\begin{theorem}
\label{thm:gittins} 
Consider an $M/G/1$ multi-class queue. The Gittins index policy minimizes the 
mean holding costs, 
\[
  \sum_k p_k w_k \E{\ST_k},
\]
among the non-anticipating scheduling policies.
\end{theorem}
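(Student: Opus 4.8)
The plan is to recognize the multi-class M/G/1 scheduling problem as a multi-armed bandit problem and to invoke the classical Gittins index theorem. First I would set up the correspondence. Each job currently in the system is an \emph{arm} whose state is the pair $(k,a)$ of its class $k$ and its attained service $a$; a non-anticipating policy knows this state but not the job's actual (remaining) service requirement, which is exactly the informational setting of a bandit. ``Activating'' the arm means giving the server to that job, which after $\tau$ units of service moves its state to $(k,a+\tau)$ unless the job completes first, in which case the arm disappears. Since the Poisson arrival stream is exogenous and independent of the scheduling decisions, I would couple the arrival processes across policies: the (random) collection of jobs that ever enter the system, together with their arrival epochs, is then common to all non-anticipating policies, so the problem is genuinely one of sequencing a policy-independent set of arms that become available over time. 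Finally, regarding the completion of a class-$k$ job as earning a reward $w_k$ — which is legitimate because, by Little's law, $\sum_k p_k w_k \E{\ST_k}$ is proportional to the time-average holding cost $\sum_k w_k \E{L_k}$, where $L_k$ is the number of class-$k$ jobs present — turns cost minimization into reward maximization in the bandit.

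Second, I would check that the quantity $G_k(a)$ in the statement is precisely the Gittins index of an arm in state $(k,a)$, in the ``retirement'' (calibration) formulation: the largest ratio of expected reward to expected service time attainable by a rule that decides how long to keep serving this one job before stopping. Serving a quantum of length at most $\delta$ starting from attained service $a$ earns the reward $w_k$ with probability $\PP{\JS_k - a \le \delta \mid \JS_k > a}$ and consumes $\cE{\min\{\JS_k - a,\delta\}}{\JS_k > a}$ units of service in expectation; taking the supremum over $\delta>0$ gives exactly $G_k(a)$. Hence the index policy in the paper is the Gittins index policy of the associated bandit.

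Third, I would transfer the known optimality of the index rule. The most transparent route is Weber's ``prevailing-charge'' argument (or, equivalently, the work-decomposition / conservation-law argument of Bertsimas and Ni\~no-Mora): attach to each job a non-increasing charge that starts at its Gittins index and is lowered to the current index whenever the latter drops; show that a job served under its own index-defining stopping rule pays an expected charge equal to the reward it yields, and that always serving the job of highest prevailing charge both maximizes the total collected charge and attains this per-job identity, so no non-anticipating policy can do better. Because the criterion here is the long-run average (undiscounted) cost, I would run this comparison over one regeneration cycle (busy period) of the queue, which is finite almost surely and of policy-independent length for $\rho<1$, thereby reducing to a finite bandit; alternatively one can establish optimality of the $\beta$-discounted index rule, note that the discounted indices converge to $G_k(a)$ as $\beta\to0$, and pass to the limit by an Abelian/Tauberian argument.

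The step I expect to be the main obstacle is handling the \emph{general} service distribution. With exponential service times the memoryless property makes pairwise interchanges of jobs essentially free and the result follows by a simple exchange argument; with a general $\JS_k$, serving a job even partially changes its state, so optimality cannot be obtained that way and genuinely requires the index machinery — the reformulation of $G_k(a)$ as an optimal expected-reward rate together with the prevailing-charge (or LP-duality) bookkeeping. A secondary technical point is integrability: the average-cost comparison is only meaningful once one restricts to policies with finite expected holding cost (for instance $\E{\JS_k^2}<\infty$ guarantees $\E{\ST_k}<\infty$ under any work-conserving policy), which I would note at the outset.
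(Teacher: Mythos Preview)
The paper does not prove this theorem at all: it is stated as a citation, ``We recall the following optimality result proved by Gittins [Theorem~3.28],'' referring to Gittins' 1989 monograph. So there is no paper-proof to compare against; the authors treat it as a black box and immediately specialize it (deterministic per-class sizes, $w_k=1/x_k$) to obtain the SPTP corollaries.

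Your sketch is a reasonable outline of how the cited result is actually established, and the ingredients you list --- the bandit reformulation with state $(k,a)$, the identification of $G_k(a)$ as the optimal reward-rate (calibration) index, and a prevailing-charge or achievable-region argument --- are the standard ones. Two points would need more care if you were to flesh it out. First, the classical Gittins theorem is for a \emph{fixed} family of arms; here arms arrive over time, so you are really in the arm-acquiring / branching-bandit setting (Whittle, Klimov, Weiss), and the busy-period reduction you propose still has arrivals inside it, so it does not literally reduce to a finite bandit --- one needs the extension that allows new arms to appear, or the Klimov-type achievable-region proof directly in the open system. Second, your Little's-law step equating $\sum_k p_k w_k \E{\ST_k}$ with the time-average holding cost is fine, but the subsequent translation ``completion of a class-$k$ job earns reward $w_k$'' is the \emph{tax} formulation (holding cost while present), not a terminal reward; the prevailing-charge argument handles taxes directly, so there is no need for that detour.
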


The non-anticipating policies are only aware of the attained service times 
$a_i$ of each job $i$ and the service time distributions (but not on the 
actual service times). So, in general, non-anticipating policies do not know 
the remaining service times implying that, e.g., SPTP does not belong to 
non-anticipating disciplines. However, there is one exception: If the service 
times $\JS_k$ are deterministic, $\PP{\JS_k = x_k} = 1$, then the remaining service 
times $x_k - a_k$ are available for the non-anticipating policies (with 
probability~1). In such a case, all policies are non-anticipating.

\subsection{Optimality of SPTP}
\label{sec:optimality}

Consider now an M/G/1 queue with a {\em single} class and load $\rho < 1$. Let 
$\JS$ and $\ST$ denote, respectively, the generic service and sojourn times of a job. 
In addition, let $\tau(x)$ denote the conditional mean sojourn time 
of a job with service time $x$,
$$
\tau(x) \triangleq \cE{ \ST }{ \JS=x }.
$$
Assume first that the support of the service time distribution is finite. 
In other words, there are $x_1 < \ldots < x_K$ such that 
$\sum_k \PP{\JS = x_k} = 1$. 
The following result is an immediate consequence of Theorem~\ref{thm:gittins} 
as soon as we associate class $k$ with the jobs with the same service time 
requirement $x_k$. Note that the Gittins index is now clearly given by 
\[
  G_k(a) = \frac{w_k}{x_k - a} 
\]
with the optimal $\delta$ equal to $x_k - a$.

\begin{corollary}
\label{cor:fin-sup} 
Consider an $M/G/1$ queue for which the support of the service time distribution 
is finite. Among all scheduling policies, the mean holding costs, 
\[
  \sum_k \PP{\JS = x_k} w_k \tau(x_k), 
\]
are minimized by the policy that will serve job $i^*$ such that 
\[
  i^* = \arg\min_i \frac{\Delta_i}{w_{k(i)}}
\]
where $k(i)$ and $\Delta_i$ denote the class and the remaining
service requirement of job $i$.
\end{corollary}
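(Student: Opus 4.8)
The plan is to obtain Corollary~\ref{cor:fin-sup} as a direct specialization of Theorem~\ref{thm:gittins}, exploiting the exceptional case noted above: when the service times are deterministic, every scheduling policy is non-anticipating, so the Gittins optimality extends from the non-anticipating class to \emph{all} scheduling policies.

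First I would set up the reduction. Given a service time distribution with finite support $\{x_1 < \ldots < x_K\}$, associate \emph{class} $k$ with the jobs whose service requirement equals $x_k$. Then the single-class M/G/1 becomes an instance of the multi-class M/G/1 of Section~\ref{sec:gittins} with $\lambda_k = \lambda\,\PP{\JS = x_k}$, hence $p_k = \PP{\JS = x_k}$, class-$k$ service time $\JS_k \equiv x_k$ deterministic, and the given $w_k$ as the class-$k$ holding cost rate. Note $\rho = \sum_k \lambda_k x_k = \lambda\,\E{\JS} < 1$, so the stability assumption of Theorem~\ref{thm:gittins} holds, and $\E{\ST_k} = \cE{\ST}{\JS = x_k} = \tau(x_k)$, so that $\sum_k p_k w_k \E{\ST_k} = \sum_k \PP{\JS = x_k}\, w_k\, \tau(x_k)$ is exactly the objective in the statement.

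Next I would evaluate the Gittins index here. Conditioned on $\JS_k > a$ (i.e.\ $a < x_k$), the event $\{\JS_k - a \le \delta\}$ has probability $1$ if $\delta \ge x_k - a$ and $0$ otherwise, while $\cE{\min\{\JS_k - a,\delta\}}{\JS_k > a} = \min\{x_k - a,\delta\}$; taking the supremum over $\delta > 0$ gives $G_k(a) = w_k/(x_k - a)$, attained at $\delta = x_k - a$ (as already noted before the statement). Since the remaining service requirement of job $i$ is $\Delta_i = x_{k(i)} - a_i$, the Gittins index policy serves
\[
  i^* \;=\; \arg\max_i G_{k(i)}(a_i) \;=\; \arg\max_i \frac{w_{k(i)}}{\Delta_i} \;=\; \arg\min_i \frac{\Delta_i}{w_{k(i)}},
\]
which is precisely the rule in the corollary.

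Finally I would close the argument by invoking the exception: because each $\JS_k$ is deterministic, the remaining service times are known with probability~$1$, so every scheduling policy is non-anticipating. Theorem~\ref{thm:gittins} then says the Gittins index policy minimizes $\sum_k p_k w_k \E{\ST_k}$ among \emph{all} scheduling policies, and by the identifications above this is the assertion of Corollary~\ref{cor:fin-sup}. I do not anticipate a real obstacle; the only points needing care are the justification that deterministic service times make all policies non-anticipating (so that ``among the non-anticipating policies'' in Theorem~\ref{thm:gittins} may be upgraded to ``among all policies''), the bookkeeping $p_k = \PP{\JS = x_k}$ and $\E{\ST_k} = \tau(x_k)$, and the (immaterial) tie-breaking in the $\arg\min$.
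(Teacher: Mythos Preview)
Your proposal is correct and follows exactly the route the paper takes: associate class $k$ with the jobs of size $x_k$, use the deterministic-service-time exception so that Theorem~\ref{thm:gittins} applies to all policies, compute $G_k(a)=w_k/(x_k-a)$, and rewrite the Gittins rule as $\arg\min_i \Delta_i/w_{k(i)}$. The paper merely states these steps tersely in the text preceding the corollary, whereas you have spelled out the bookkeeping ($p_k=\PP{\JS=x_k}$, $\E{\ST_k}=\tau(x_k)$, stability) explicitly.
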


In particular, if we choose 
$w_k=1/x_k$,
we see that the mean slowdown is 
minimized by SPTP:

\begin{corollary}
\label{cor:sptp-opt} 
Consider an $M/G/1$ queue for which the support of the service time distribution 
is finite.
SPTP minimizes the mean slowdown $\E{\gamma}$ among 
all scheduling policies.
\end{corollary}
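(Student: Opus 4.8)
The plan is to derive Corollary~\ref{cor:sptp-opt} as an immediate specialization of Corollary~\ref{cor:fin-sup}, obtained by a judicious choice of the holding cost rates. First I would set $w_k = 1/x_k$ in Corollary~\ref{cor:fin-sup}: a job whose service requirement is $x_k$ (which, after the class-by-class reduction, is its deterministic class size) incurs holding cost at rate $1/x_k$ per unit time spent in the system. With this choice the objective of Corollary~\ref{cor:fin-sup} becomes
\[
  \sum_k \PP{\JS = x_k}\,\frac{1}{x_k}\,\tau(x_k)
  = \sum_k \PP{\JS = x_k}\,\frac{\cE{\ST}{\JS = x_k}}{x_k}
  = \sum_k \PP{\JS = x_k}\,\cE{\gamma}{\JS = x_k}
  = \E{\gamma},
\]
where the second equality uses $\tau(x_k) = \cE{\ST}{\JS = x_k}$ together with the fact that on the event $\{\JS = x_k\}$ one has $\cE{\ST}{\JS = x_k}/x_k = \cE{\ST/\JS}{\JS = x_k}$, and the last equality is the law of total expectation. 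Hence, for this particular cost structure, minimizing the mean holding cost is precisely minimizing the mean slowdown $\E{\gamma}$.

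Second, I would check that the optimal index rule supplied by Corollary~\ref{cor:fin-sup} collapses to SPTP. That rule serves job $i^* = \arg\min_i \Delta_i / w_{k(i)}$; substituting $w_{k(i)} = 1/x_{k(i)}$ and noting that $x_{k(i)}$ is precisely the initial (original) service requirement $\Delta_i^*$ of job $i$, this becomes $i^* = \arg\min_i \Delta_i^* \Delta_i$, which is the SPTP discipline introduced in Section~\ref{sec:mg1-slow}. Combining the two steps yields the claim: among all scheduling policies in an M/G/1 queue with finitely supported service time distribution, SPTP minimizes $\E{\gamma}$.

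I do not expect a genuinely hard step here — the substantive work has already been done in Corollary~\ref{cor:fin-sup} and, underneath it, in Theorem~\ref{thm:gittins}. The only points worth care are: (i)~that the passage from ``non-anticipating policies'' in Theorem~\ref{thm:gittins} to ``all scheduling policies'' is legitimate, which holds because once jobs sharing a common size are grouped into one class the within-class service times are deterministic, so remaining sizes are known and every policy is non-anticipating; (ii)~that the rates $w_k = 1/x_k$ are admissible, i.e.\ strictly positive and finite, which is guaranteed by $0 < x_1 < \cdots < x_K < \infty$; and (iii)~the identity $\tau(x_k)/x_k = \cE{\gamma}{\JS = x_k}$, which is the single line of ``computation'' in the argument. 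Removing the finite-support assumption is a separate matter requiring an approximation/limiting argument, and is not needed for the statement as posed.
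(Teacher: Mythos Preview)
Your proposal is correct and follows precisely the paper's approach: the paper simply states that choosing $w_k = 1/x_k$ in Corollary~\ref{cor:fin-sup} yields the result, and you have spelled out the two verifications (that the objective becomes $\E{\gamma}$ and that the index rule becomes SPTP) in full. Your additional remarks on points (i)--(iii) are accurate and make explicit what the paper leaves implicit.
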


Since any service time distribution can be 
approximated with an arbitrary precision by discrete service time 
distributions, we %
expect that the above result holds also for general %
service time distributions.
Note also that the choice $w_k = 1$, for all $k$, results in the well-known 
optimality result of SRPT with respect to the mean sojourn time.

\section{Value Functions for M/G/1}
\label{sec:values}

In this section we analyze a single M/G/1 queue in isolation
with FIFO, LIFO, and SPTP scheduling disciplines
and derive expressions for the size-aware value functions 
\emph{with respect to arbitrary job specific holding costs}.
The corresponding results for SPT and SRPT are also given in the Appendix.
As special cases, one trivially obtains the value functions
with respect to the mean sojourn time and the mean slowdown.
Thus, these results generalize the corresponding results in \cite{hyytia-ejor-2012},
where the size-aware value functions
with respect to sojourn time are given (except for SPTP). 

\subsubsection*{Holding cost model}
Consider an M/G/1 queue with an arbitrary but fixed work conserving
scheduling discipline and load $\rho<1$.
Each existing task incurs costs at some \emph{task specific holding cost rate}
denoted by $\HC_i$ for job $i$.
Assume that $\HC_i$ are i.i.d.\ random variables, $\HC_i\sim\HC$.
However, $\HC_i$ may depend on the corresponding service requirement $\JS_i$.
Both the service requirement and the holding cost
become known upon arrival.

The cumulative cost during $(0,t)$ for an initial state $\bz$
is %
$$
V_{\bz}(t) \triangleq \int_0^t \sum_{i\in\mathcal{N}_{\bz}(s)} \HC_i \,ds,
$$
where $\mathcal{N}_{\bz}(s)$ denotes the set of tasks present in the system at time $s$.
Similarly, the long-term mean cost rate is
$$
r_{\bz}
\triangleq \lim_{t\to\infty} \frac{1}{t}\,\E{ V_{\bz}(t) } 
= \lim_{t\to\infty} \frac{1}{t}\,\E{ \int_0^t \sum_{i\in\mathcal{N}_{\bz}(s)} \HC_i \,ds},
$$
which for an ergodic Markovian system reads
$$
r = \lim_{t\to\infty} \E{ \sum_{i\in\mathcal{N}_{\bz}(t)} \HC_i}.
$$
The value function $v_{\bz}$ is defined as
the expected deviation from the mean cost rate
in infinite time-horizon,
\begin{align*}
v_{\bz} &\triangleq \lim_{t\to\infty} \E{V_{\bz}(t) - r\cdot t}.
\end{align*}
A value function characterizes how %
expensive it is to start from 
each state. %
In our setting,
it enables one to compute the expected cost of admitting
a job with size $x$ and holding cost rate $\hc$ to a queue,
which afterwards behaves as an M/G/1 queue
with the given scheduling discipline:
\begin{equation}\label{eq:cost}
\cost_{\bz}(x,\hc) \triangleq v_{\bz\oplus(x,\hc)} - v_{\bz},
\end{equation}
where $\bz\oplus(x,\hc)$ denotes the state resulting from adding a new job $(x,\hc)$ in state $\bz$.
In Section~\ref{sec:dispatching} we will carry out the FPI
step in the context of dispatching problem, 
for which a corresponding value function %
is a prerequisite.

\subsubsection*{Size dependent holding cost}
In general, the holding cost rate $b$ can be arbitrary, e.g., a class-specific
i.i.d.\ random variable. 
However, %
in two important special cases, 
mean sojourn time and mean slowdown,
it depends solely on the service requirement $x$, $b=c(x)$:
\begin{align*}
&\text{mean sojourn time: }    & c(x) &= 1\\
&\text{mean slowdown: }        & c(x) &= 1/x
\end{align*}
In equilibrium, the average cost \emph{per job} is
$\E{c(\JS)\cdot \ST}$.
Using Little's result, we have
for the \emph{mean cost rate},
$$
r
 = \lambda\,\E{c(\JS)\cdot \ST}
 = \left\{
\begin{array}{ll}
 \lambda\, \E{\ST} = \E{N}, & \text{for $c(x)=1$,}\\
 \lambda\, \E{\gamma},      & \text{for $c(x)=1/x$.}
\end{array}\right.
$$
where $\E{N}$ denotes the mean number in the system.

\subsection{M/G/1-FIFO Queue}

Next we derive the size-aware value function for an M/G/1-FIFO queue
with respect to arbitrary job specific holding costs.
To this end,
let $\bz=((\Delta_1,\hc_1); \ldots ;(\Delta_n,\hc_n))$ denote the remaining service requirements (measured in time)
$\Delta_i$ and the corresponding \emph{holding cost rates} $\hc_i$ at state $\bz$.
The total rate at which costs are accrued at given state $\bz$ is thus the sum $\sum_i \hc_i$.
Job $1$ is currently receiving service and job $n$ is the latest arrival.
The total backlog is denoted by $u_{\bz}$,
$$
u_{\bz} = \sum_{i=1}^n \Delta_i.
$$
\begin{prop}%
For the size-aware relative value in an M/G/1-FIFO queue with respect
to arbitrary job specific holding costs it holds that
\begin{equation}\label{eq:vn-fifo}
v_{\bz}-v_0 = \sum_{i=1}^{n} \left( \hc_i \sum_{j=1}^i \Delta_j\right)
 + \frac{\lambda\cdot\E{\HC}}{2\,(1-\rho)} u_{\bz}^2
\end{equation}
where $\E{\HC}$ is the mean holding cost rate for all later jobs.
\end{prop}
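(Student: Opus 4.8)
The plan is to identify $v_{\bz}-v_0$ as a sum of two contributions: the cost accrued by the jobs currently present in state $\bz$ until the system empties of them, and the extra cost that their presence imposes on the jobs that arrive later (which must wait behind the backlog $u_{\bz}$). Since FIFO is work-conserving and non-preemptive, job $i$ in state $\bz$ departs at time $\sum_{j=1}^i \Delta_j$, so its total remaining holding cost is exactly $\hc_i \sum_{j=1}^i \Delta_j$; summing over $i$ gives the first term in \eqref{eq:vn-fifo}. It remains to account for the second term, i.e., to show that the net effect of the initial backlog $u_{\bz}$ on all future arrivals contributes precisely $\frac{\lambda \E{\HC}}{2(1-\rho)}\, u_{\bz}^2$ relative to starting from the empty system.

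First I would set up the comparison carefully: couple the arrival stream and the service requirements of future jobs in the process started from $\bz$ with those in the process started from $0$. Under FIFO, a future arrival sees, in the $\bz$-process, a workload that is larger by exactly the residual of the initial backlog at its arrival epoch, and this extra delay is passed on to every subsequent job (work-conserving, non-preemptive). A clean way to organize this is via the known transient/relative-value identity for M/G/1-FIFO workload: if $U(t)$ denotes the unfinished work, then starting from backlog $u$ versus $0$ shifts the waiting time of each future customer, and the expected integrated extra holding cost telescopes. Concretely, I would argue that the expected extra waiting-time-area contributed by an initial backlog $u$ equals $\frac{u^2}{2(1-\rho)}$ (this is the standard ``cost of initial work'' for M/G/1, obtainable from the Takács/Pollaczek–Khinchine transient analysis or from a simple busy-period argument: an initial lump of work $u$ generates a delay that, integrated over the jobs arriving during the induced sub-busy-period, yields $u^2/(2(1-\rho))$ in expectation), and then multiply by the cost rate $\lambda \E{\HC}$ at which future jobs accumulate holding-cost liability per unit of delay they suffer. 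Putting these together gives the second term.

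The main obstacle I expect is making the ``cost of initial backlog'' step rigorous rather than heuristic: one must show the limit $\lim_{t\to\infty}\E{V_{\bz}(t)-rt}$ exists and isolate the $u_{\bz}^2$ dependence cleanly, separating it from the (state-independent) constant $v_0$ and from the first, ``own-cost'' term. I would handle this by writing $v_{\bz}-v_0 = \E{\text{(own remaining cost of jobs in }\bz) + (\text{extra cost to all future jobs due to }\bz)}$, noting the two effects are additive because FIFO processes the initial jobs first and then behaves exactly like a fresh M/G/1 queue offset by the residual backlog, and then invoking the busy-period decomposition to evaluate the second expectation. A sanity check at the end: specializing to $\hc_i \equiv 1$ (mean sojourn time) must reproduce the M/G/1-FIFO value function of \cite{hyytia-ejor-2012}, and the second term must match $\frac{\lambda}{2(1-\rho)} u_{\bz}^2$ since $\E{\HC}=1$ there; specializing to $\hc_i = 1/\Delta^*_i$ gives the slowdown value function and should be consistent with \eqref{eq:fifo-slowdown} in equilibrium.
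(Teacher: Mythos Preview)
Your proposal is correct and follows essentially the same approach as the paper: couple a system started in state $\bz$ with one started empty, split $v_{\bz}-v_0$ into the (deterministic) own-cost of the $n$ initial jobs and the extra cost imposed on future arrivals, and evaluate the latter via a busy-period argument that yields the factor $\lambda\,\E{\HC}\,u_{\bz}^2/(2(1-\rho))$. The paper's derivation of the second term is just a bit more explicit (it counts $\lambda u_{\bz}$ mini busy periods, each containing on average $1/(1-\rho)$ jobs, each delayed on average $u_{\bz}/2$), but this is exactly the busy-period computation you invoke.
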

\begin{proof}
We compare two systems with the same arrival patterns, 
System 1 initially in state $\bz$ and System 2 initially empty.
The two systems behave equivalently after System 1 becomes empty. %
The cost incurred by the 
current $n$ jobs, present only in System 1, is already fixed as
$$
h_1=\sum_{i=1}^{n} \left( \hc_i \sum_{j=1}^i \Delta_j\right).
$$
The later 
arriving jobs encounter a longer waiting time in System 1.  
The key observation here is that these jobs experience an additional delay of $Y$ 
in System 1 when compared to System 2, as is illustrated in Fig.~\ref{fig:vn-fifo}.
Otherwise the %
sojourn times are equal.
On average $\lambda u_{\bz}$ (mini) busy periods occur 
before System 1 becomes empty.
These busy periods are independent and on average
$1/(1-\rho)$ jobs are served during each of them.
The average additional waiting time is $\E{Y}=u_{\bz}/2$.
Therefore, the later arriving jobs incur on average
$$
h_2 = \frac{\lambda\cdot\E{\HC}}{2\,(1-\rho)} u_{\bz}^2
$$
higher holding cost in System 1 than in System 2.
Total cost difference is $h_1+h_2$, which completes the proof.
\end{proof}

\begin{figure}
\centering
\includegraphics[width=60mm]{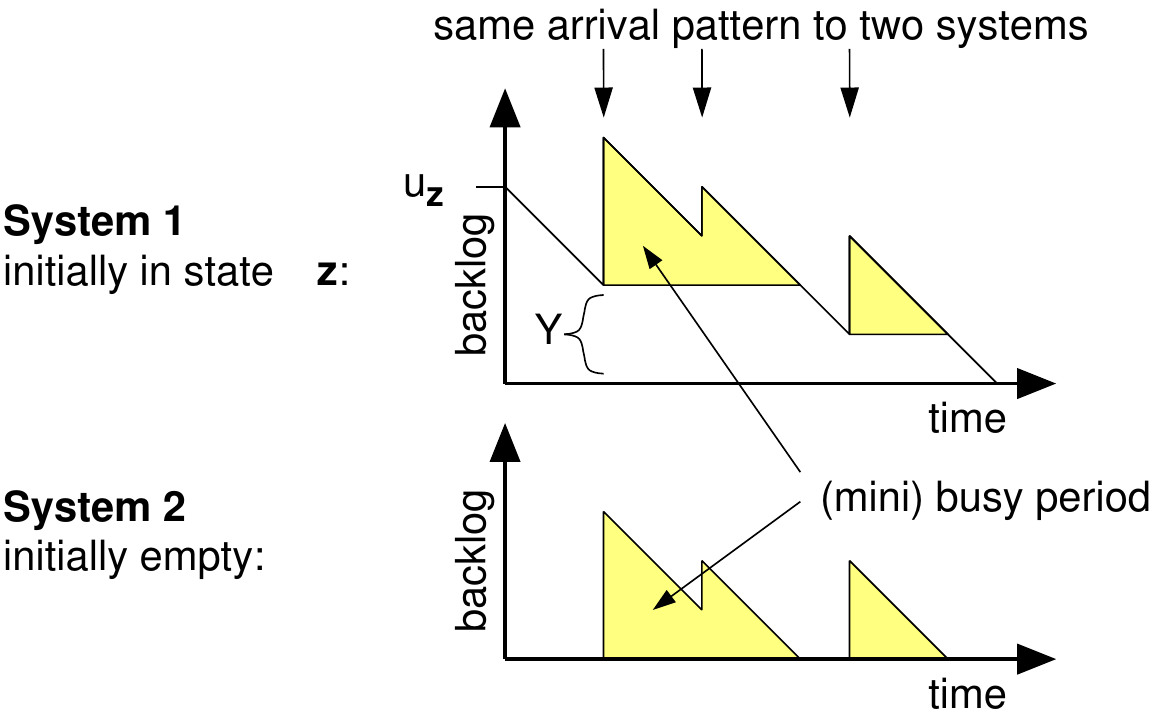}
\caption{Derivation of the value function in an M/G/1-FIFO queue.}
\label{fig:vn-fifo}
\end{figure}

\begin{corollary}%
The mean cost in terms of sojourn time due to accepting a job with size $x$ to 
a size-aware M/G/1-FIFO queue initially at state $\bz$ is
$$
\cost_\bz(x) = x + u_{\bz} + \frac{\lambda}{2(1-\rho)}(2\,u_{\bz}\,x+x^2).
$$
\end{corollary}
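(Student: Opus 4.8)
The plan is to obtain this corollary directly from the Proposition by specializing to the sojourn-time cost structure and then applying the definition \eqref{eq:cost} of $\cost_{\bz}(x)$. The sojourn-time criterion corresponds to the holding cost rate $c(x)=1$, so every job — both those already present in $\bz$ and all future arrivals — carries unit holding cost; hence $\hc_i=1$ for $i=1,\ldots,n$ and $\E{\HC}=1$. Substituting this into \eqref{eq:vn-fifo} collapses the value function to $v_{\bz}-v_0 = \sum_{i=1}^{n}\sum_{j=1}^{i}\Delta_j + \frac{\lambda}{2(1-\rho)}\,u_{\bz}^2$, which is simply the telescoped sum of the waiting-plus-service times of the queued jobs together with the quadratic future-delay term.

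Next I would spell out the state $\bz\oplus(x)$: in a FIFO queue the accepted job of size $x$ joins at the tail, becoming job $n+1$ with $\Delta_{n+1}=x$, and the total backlog grows to $u_{\bz\oplus(x)}=u_{\bz}+x$. Feeding this state into the specialized value function, the telescoped sum gains exactly one new term, $\sum_{j=1}^{n+1}\Delta_j = u_{\bz}+x$ (the new job's own sojourn time, namely the whole current backlog plus its own service), while the quadratic term becomes $\frac{\lambda}{2(1-\rho)}(u_{\bz}+x)^2$; crucially, none of the contributions of the already-present jobs changes, since appending a job at the tail does not alter the partial sums $\sum_{j=1}^{i}\Delta_j$ for $i\le n$.

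Finally, forming $\cost_{\bz}(x)=v_{\bz\oplus(x)}-v_{\bz}$ as in \eqref{eq:cost}, all common terms cancel and I am left with $(u_{\bz}+x) + \frac{\lambda}{2(1-\rho)}\big((u_{\bz}+x)^2 - u_{\bz}^2\big)$; expanding $(u_{\bz}+x)^2 - u_{\bz}^2 = 2\,u_{\bz}x + x^2$ gives the stated formula $\cost_\bz(x) = x + u_{\bz} + \frac{\lambda}{2(1-\rho)}(2\,u_{\bz}\,x + x^2)$. There is no genuine obstacle here, as the result is an immediate consequence of the Proposition; the only point requiring a little care is the bookkeeping that the new job occupies the \emph{last} position in the FIFO order, so that it contributes the full backlog $u_{\bz}+x$ (and not some partial sum) and leaves every earlier job's term untouched.
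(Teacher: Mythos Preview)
Your proposal is correct and follows exactly the route the paper indicates: the paper states that the corollary ``follows trivially from \eqref{eq:cost} and \eqref{eq:vn-fifo},'' and you have carried out precisely that computation by specializing $\hc_i=1$, $\E{\HC}=1$, placing the new job at position $n{+}1$, and differencing the value functions. There is nothing to add.
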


\begin{corollary}%
The mean cost in terms of slowdown due to accepting a job with size $x$ to 
a size-aware M/G/1-FIFO queue initially at state $\bz$ is
\begin{equation}\label{eq:slowdown-fifo}
\cost_\bz(x) = 1 + \frac{u_{\bz}}{x} + \frac{\lambda\cdot\E{\JS^{-1}}}{2(1-\rho)}(2u_{\bz}x+x^2).
\end{equation}
\end{corollary}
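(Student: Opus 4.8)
The plan is to obtain \eqref{eq:slowdown-fifo} as a direct specialization of the general value function \eqref{eq:vn-fifo} through the admission-cost identity \eqref{eq:cost}, rather than re-deriving anything from scratch. Recall that for the slowdown criterion the holding cost rate attached to a job of size $x$ is $c(x)=1/x$, so ``accepting a job of size $x$'' means adding the pair $(x,1/x)$, and the pool of future arrivals has mean holding cost rate $\E{\HC}=\E{c(\JS)}=\E{\JS^{-1}}$. Thus I would simply evaluate $\cost_\bz(x)=\cost_\bz(x,1/x)=v_{\bz\oplus(x,1/x)}-v_\bz$ using \eqref{eq:vn-fifo} for both terms, noting that the constant $v_0$ cancels in the difference.

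First I would record what changes when a job joins a FIFO queue in state $\bz=((\Delta_1,\hc_1);\dots;(\Delta_n,\hc_n))$: it enters at the tail as job $n+1$ with $\Delta_{n+1}=x$ and $\hc_{n+1}=1/x$, the incumbent jobs $1,\dots,n$ keep their positions and hence their already-fixed sojourn-time contributions, and the backlog grows from $u_\bz$ to $u_\bz+x$. Consequently the first sum in \eqref{eq:vn-fifo} picks up exactly the single new term
\[
  \hc_{n+1}\sum_{j=1}^{n+1}\Delta_j=\frac{1}{x}\,(u_\bz+x)=\frac{u_\bz}{x}+1,
\]
while the quadratic term increases by
\[
  \frac{\lambda\,\E{\JS^{-1}}}{2(1-\rho)}\bigl((u_\bz+x)^2-u_\bz^2\bigr)=\frac{\lambda\,\E{\JS^{-1}}}{2(1-\rho)}\,(2\,u_\bz x+x^2).
\]
Adding the two contributions yields \eqref{eq:slowdown-fifo}.

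There is essentially no hard step here; the only two points that need care are (i) the claim that the incumbent jobs' costs are untouched by a later FIFO arrival --- this is exactly the structural fact that makes the one-line computation legitimate and avoids re-opening the proof of the Proposition --- and (ii) the identification $\E{\HC}=\E{\JS^{-1}}$ for the future-job pool, which is precisely where the possibly infinite factor $\E{\JS^{-1}}$ enters, mirroring its appearance in \eqref{eq:fifo-slowdown}. As a sanity check one can compare with the sojourn-time corollary $\cost_\bz(x)=x+u_\bz+\frac{\lambda}{2(1-\rho)}(2u_\bz x+x^2)$: the slowdown version is recovered by weighting the new job's own contribution $x+u_\bz$ by its holding cost rate $1/x$ and replacing the future-job weight $\E{\HC}=1$ by $\E{\HC}=\E{\JS^{-1}}$, which reproduces \eqref{eq:slowdown-fifo} term by term.
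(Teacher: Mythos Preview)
Your proposal is correct and follows exactly the approach the paper indicates: the paper itself merely states that the result follows trivially from \eqref{eq:cost} and \eqref{eq:vn-fifo}, and you have spelled out precisely that computation, specializing to $\hc_{n+1}=1/x$ and $\E{\HC}=\E{\JS^{-1}}$.
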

In both cases, it is implicitly assumed that the 
question about the admittance
is a one-time operation and
the future behavior of the queue is according to the standard M/G/1-FIFO.
The proofs follow trivially from \eqref{eq:cost} and \eqref{eq:vn-fifo}.

\subsection{M/G/1-LIFO Queue}

Next we derive the size-aware value function for the LIFO scheduling discipline.
To this end, 
let us denote the state of the system with $n$ jobs
by a vector $\bz = ((\Delta_1,\hc_1); \ldots; (\Delta_n,\hc_n))$,
where $\Delta_i$ denotes the remaining service requirement (measured in time)
and $\hc_i$ the \emph{holding cost rate} of job $i$. 
Job $1$ is the latest arrival and currently receiving service (if any),
i.e., without new arrivals the jobs are processed in the natural order:
$1, 2, \ldots, n$.

\begin{prop}%
For the size-aware relative value in a preemptive M/G/1-LIFO queue with respect
to arbitrary job specific holding cost it holds that
\begin{equation}\label{eq:vn-plifo}
v_{\bz}-v_0 = \frac{1}{1-\rho}\sum_{i=1}^{n} \left( \hc_i \sum_{j=1}^i \Delta_j\right).
\end{equation}
\end{prop}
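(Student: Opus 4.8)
The plan is to reuse the coupling argument from the FIFO proof. From the definition $v_{\bz}=\lim_{t\to\infty}\E{V_{\bz}(t)-r\,t}$ one gets, for two copies of the queue driven by the \emph{same} Poisson arrival stream and the same service requirements of the future jobs, that $v_{\bz}-v_0=\lim_{t\to\infty}\E{V_{\bz}(t)-V_0(t)}$, where System~1 starts in state $\bz$ and System~2 starts empty. So it suffices to evaluate the expected total holding-cost difference between the two coupled systems.

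The structural fact specific to \emph{preemptive} LIFO is that a job arriving after time~$0$ is served to completion within the (sub-)busy period it generates, during which only it and later arrivals (and their descendants) receive service, while the jobs already present in $\bz$ sit frozen underneath. Hence the sojourn time of every post-$0$ arrival depends only on that job's own size and on the arrivals occurring during its sub-busy period, not on the backlog seen at its arrival instant; it is therefore \emph{identical} in System~1 and System~2. Consequently all future jobs accrue exactly the same cost in the two systems, and $V_{\bz}(t)-V_0(t)$ equals the holding cost accumulated up to time $t$ by the $n$ jobs initially present in System~1. Since $\rho<1$, these jobs all depart a.s.\ in finite time, so letting $t\to\infty$ (with a uniform-integrability check to move the limit inside the expectation) gives $v_{\bz}-v_0=\sum_{i=1}^n \hc_i\,\E{\ST_i}$, where $\ST_i$ is the time until job $i$ leaves System~1.

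It then remains to compute $\E{\ST_i}$. Absent outside help, preemptive LIFO clears the initial jobs in the order $1,2,\ldots,n$: job~$1$ departs at the end of the ordinary M/G/1 busy period $B_1$ started by its residual work $\Delta_1$; at that instant only jobs $2,\ldots,n$ remain, so job~$2$ departs at time $B_1+B_2$ with $B_2$ the busy period started by $\Delta_2$; and so on, job~$i$ departing at $\sum_{j=1}^{i}B_j$. Because the $B_j$ are generated by arrivals in disjoint time intervals, the strong Markov property gives each $B_j$ the law of a busy period seeded by work $\Delta_j$, so $\E{B_j}=\Delta_j/(1-\rho)$; by linearity $\E{\ST_i}=\frac{1}{1-\rho}\sum_{j=1}^{i}\Delta_j$. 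Summing the per-job costs yields $v_{\bz}-v_0=\frac{1}{1-\rho}\sum_{i=1}^{n}\bigl(\hc_i\sum_{j=1}^{i}\Delta_j\bigr)$, which is the claim; note that $\E{\HC}$ and $\lambda$ do not appear because the future jobs contribute nothing to the difference.

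I expect the main obstacle to be making the ``frozen base'' picture rigorous — i.e.\ the sample-path claim that every future arrival has the same sojourn time in the two coupled systems (an induction on the LIFO stack / on arrival order) — together with the standard technicality of interchanging the $t\to\infty$ limit with the expectation. The busy-period bookkeeping for the initial jobs is routine once that is in place.
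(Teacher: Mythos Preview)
Your proposal is correct and follows essentially the same route as the paper: couple two systems (one in state $\bz$, one empty) with the same arrival stream, observe that under preemptive LIFO every future arrival has the same sojourn time in both systems, and conclude that $v_{\bz}-v_0$ equals the expected holding cost of the $n$ initial jobs. The only difference is granularity: the paper states in one line that ``the current state bears no meaning for the future arrivals'' and then quotes the formula $\E{R_i}=(\sum_{j\le i}\Delta_j)/(1-\rho)$ from \cite{hyytia-ejor-2012}, whereas you supply the ``frozen base'' justification and the busy-period decomposition $\E{\ST_i}=\sum_{j\le i}\E{B_j}$ explicitly.
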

\begin{proof}
We compare System 1 initially in state $\bz$ and System 2 initially empty.
The current state bears no meaning for the future arrivals with preemptive LIFO,
and thus the difference in the expected costs is equal to the cost
the current $n$ jobs in System 1 incur:
$$
v_{\bz}-v_0 = \sum_{i=1}^{n} \hc_i\, \E{ R_i},
$$
where $R_i$ denotes the remaining sojourn time of job $i$ \cite{hyytia-ejor-2012},
$$
\E{R_i} = \frac{\sum_{j=1}^i \Delta_j}{1-\rho},
$$
which completes the proof.
\end{proof}
For the mean sojourn time, the holding cost rate is constant $\hc_i=1$ and
a sufficient state description is $(\Delta_1,\ldots,\Delta_n)$.
\begin{corollary}%
The cost in terms of sojourn time due to accepting a job with size $x$ to 
a size-aware preemptive M/G/1-LIFO queue %
at state $\bz$ is
$$
\cost_{\bz}(x) = \frac{1}{1-\rho} (n+1) x.
$$
\end{corollary}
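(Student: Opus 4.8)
The statement to prove is the last Corollary: for a size-aware preemptive M/G/1-LIFO queue in state $\bz$ with $n$ jobs, the cost of admitting a new job of size $x$ (with respect to the mean sojourn time) is $\cost_{\bz}(x)=\frac{1}{1-\rho}(n+1)x$. The plan is to invoke the value-function identity \eqref{eq:cost}, $\cost_{\bz}(x)=v_{\bz\oplus(x,\hc)}-v_{\bz}$, together with the LIFO value function \eqref{eq:vn-plifo} specialized to the mean-sojourn-time cost structure, i.e.\ $\hc_i=1$ for all $i$. Since the holding-cost rate is constant, the state reduces to $\bz=(\Delta_1,\ldots,\Delta_n)$, and \eqref{eq:vn-plifo} becomes $v_{\bz}-v_0=\frac{1}{1-\rho}\sum_{i=1}^n\sum_{j=1}^i\Delta_j$.

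The key computational step is to identify the new state $\bz\oplus(x,1)$. Under preemptive LIFO the arriving job becomes job $1$ (the latest arrival, currently in service), and the existing jobs are re-indexed so that former job $i$ becomes job $i+1$, retaining its remaining requirement $\Delta_i$. Thus $\bz\oplus(x,1)=(x,\Delta_1,\ldots,\Delta_n)$, a state with $n+1$ jobs. Applying \eqref{eq:vn-plifo} to this state and subtracting, almost every partial-sum term cancels: the difference telescopes to the contribution of the new job sitting at the front of every partial sum. Concretely, $v_{\bz\oplus(x,1)}-v_{\bz}=\frac{1}{1-\rho}\big[x+(x+\Delta_1)+(x+\Delta_1+\Delta_2)+\cdots+(x+\Delta_1+\cdots+\Delta_n)-\sum_{i=1}^n\sum_{j=1}^i\Delta_j\big]=\frac{1}{1-\rho}(n+1)x$, since each of the $n+1$ partial sums of the new state contributes exactly one extra $x$ and the remaining $\Delta$-terms match those of the old state term by term.

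There is really no hard obstacle here; the only thing requiring care is the bookkeeping of the re-indexing under LIFO — one must be sure that the new job is prepended rather than appended, so that it appears in all $n+1$ inner sums (this is exactly why the answer is proportional to $n+1$ rather than to a single $x$). A secondary point worth a sentence is the implicit modelling assumption, already flagged in the text for the FIFO corollaries, that admission is a one-shot decision and the queue thereafter evolves as a standard preemptive M/G/1-LIFO; this is what licenses the use of \eqref{eq:vn-plifo} for the post-admission state. Once the telescoping is written out, the claim follows immediately from \eqref{eq:cost}.
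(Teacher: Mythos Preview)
Your proposal is correct and is precisely the approach the paper intends: it states that the corollary ``follows trivially from \eqref{eq:vn-plifo} and from definition \eqref{eq:cost},'' which is exactly the computation you carry out by prepending the new job, applying \eqref{eq:vn-plifo} with $\hc_i=1$, and observing the telescoping that leaves $(n{+}1)x/(1{-}\rho)$. Your remark about the re-indexing (prepend, not append) is the only point that genuinely needs care, and you handle it correctly.
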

For the slowdown, the initial service requirements
define the holding costs and a sufficient
state description is $\bz =((\Delta_1,\Delta_1^*),..,(\Delta_1,\Delta_1^*))$,
where $\Delta_i$ and $\Delta_i^*$ denote the remaining and initial service requirement of job $i$,
so that the holding cost of job $i$ is $\hc_i=1/\Delta_i^*$:
\begin{corollary}%
The cost in terms of slowdown due to accepting a job with size $x$ to 
a size-aware preemptive M/G/1-LIFO queue %
at state $\bz$ is
\begin{equation}\label{eq:slowdown-lifo}
\cost_{\bz}(x) = \frac{1}{1-\rho}\left( 1 + \sum_{i=1}^{n} x/\Delta_i^*\right).
\end{equation}
\end{corollary}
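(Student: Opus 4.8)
The plan is to obtain \eqref{eq:slowdown-lifo} directly from the definition \eqref{eq:cost} together with the already-proved LIFO value function \eqref{eq:vn-plifo}; no new sample-path argument is needed. First I would note that for the slowdown objective the holding cost rate of a job equals the reciprocal of its \emph{initial} size, so a job of size $x$ arrives carrying holding cost rate $\hc = 1/x$; the quantity to evaluate is therefore $\cost_{\bz}(x) = v_{\bz\oplus(x,1/x)} - v_{\bz}$.

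Next I would describe how the arrival acts on the state vector. Under preemptive LIFO the new job immediately seizes the server and becomes job $1$, while every job previously indexed $i$ is relabelled $i+1$. Hence $\bz\oplus(x,1/x)$ has first component $(\Delta_1',\hc_1') = (x,1/x)$ and $(\Delta_{i+1}',\hc_{i+1}') = (\Delta_i,\hc_i)$ for $i=1,\dots,n$, and the partial sums in \eqref{eq:vn-plifo} satisfy $\sum_{j=1}^{i+1}\Delta_j' = x + \sum_{j=1}^i \Delta_j$.

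Then I would substitute into \eqref{eq:vn-plifo}. The $i=1$ term contributes $\hc_1'\Delta_1' = (1/x)\cdot x = 1$, and the remaining terms give $\sum_{i=1}^n \hc_i\bigl(x + \sum_{j=1}^i\Delta_j\bigr) = x\sum_{i=1}^n\hc_i + \sum_{i=1}^n\hc_i\sum_{j=1}^i\Delta_j$. Subtracting $v_{\bz}$, which equals $\frac{1}{1-\rho}\sum_{i=1}^n\hc_i\sum_{j=1}^i\Delta_j$ by \eqref{eq:vn-plifo}, the double sum cancels and one is left with $\cost_{\bz}(x) = \frac{1}{1-\rho}\bigl(1 + x\sum_{i=1}^n\hc_i\bigr)$. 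Inserting the slowdown-specific rates $\hc_i = 1/\Delta_i^*$ yields \eqref{eq:slowdown-lifo}.

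There is no genuinely hard step; the only points requiring care are (i) recognizing that the admitted job enters with cost rate $1/x$ (its initial size), and (ii) tracking the index shift so that the pre-existing double sum cancels exactly against $v_{\bz}$. It is worth remarking, consistent with \eqref{eq:lifo-slowdown}, that the marginal slowdown cost reduces to $1/(1-\rho)$ when the queue is empty and otherwise grows linearly in the ratio of the admitted size $x$ to each resident job's original size.
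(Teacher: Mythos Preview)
Your proposal is correct and follows exactly the route the paper indicates: the paper simply states that the corollary follows trivially from the LIFO value function \eqref{eq:vn-plifo} and the definition \eqref{eq:cost}, and you have carried out precisely that computation in detail, including the index shift and the specialization $\hc_i=1/\Delta_i^*$.
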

The proofs follow trivially from \eqref{eq:vn-plifo} and from definition \eqref{eq:cost}.
Note that $\cost_\bz(x)$ with (preemptive) LIFO does not depend
on the remaining service times $\Delta_i$ due to the preemption.

\subsection{M/G/1-SPTP Queue}

Next we derive the corresponding size-aware value function for the SPTP scheduling discipline.
The state description for SPTP is
$\bz=((\Delta_1,\Delta_1^*,\hc_1),$  $\ldots,$ $(\Delta_n,\Delta_n^*,\hc_n))$,
where,
without loss of generality,
we assume a decreasing priority order,
$
\Delta_i\Delta_i^* < \Delta_{i+1}\Delta_{i+1}^*,
$
i.e., the job $1$ (if any) is currently receiving service.
Then, $u_{\bz}(h)$ denotes the amount of work with a higher priority than $h$,
$$
u_{\bz}(h) \triangleq \sum_{i:\Delta_i\Delta_i^* < h} \Delta_i.
$$
Let $\lambda(x)$, $m(x)$ and $\rho(x)$ denote the
arrival rate,
mean job size
and
offered load due to jobs shorter than $x$,
\begin{equation}\label{eq:rho-x}
  \begin{array}{ll}
    \lambda(x) &\triangleq \lambda\, \PP{X<x},\\
    m(x)       &\triangleq \cE{X}{X<x},\\
    \rho(x)    &\triangleq \lambda \int_0^x  t\,f(t)\,dt,
  \end{array}
\end{equation}
where $f(x)$ denotes the job size pdf.
\begin{lemma}
The mean remaining sojourn time of 
a $(\Delta,\Delta^*)$-job in an M/G/1-SPTP queue initially in state $\bz$
is given by
\begin{equation}%
\label{eq:ER-sptp}
\E{R_{\bz}(\Delta,\Delta^*)} = \frac{u_{\bz}(\Delta\Delta^*)}{1{-}\rho(\sqrt{\Delta\Delta^*})} +
\frac{2}{\Delta^*} \int\limits_0^{\sqrt{\Delta\Delta^*}} \frac{x\,dx}{1{-}\rho(x)}.
\end{equation}
\end{lemma}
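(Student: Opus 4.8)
The plan is to compute the mean remaining sojourn time of a tagged $(\Delta,\Delta^*)$-job by tracking the work that will be served \emph{ahead} of it under SPTP, both the higher-priority work already present and the higher-priority work that arrives while the tagged job waits. Under SPTP the priority of a job is the product of its remaining and original sizes, and this index only decreases as the job is served. A key observation is that once the tagged job with current index $\Delta\Delta^*$ starts receiving service, its index strictly decreases, so from that point on only jobs whose index is smaller can still preempt it; moreover a \emph{freshly arriving} job of size $y$ has index $y^2$, so it has higher priority than the tagged job at the moment the latter's remaining size is $r$ (index $r\Delta^*$) precisely when $y^2 < r\Delta^*$, i.e.\ $y < \sqrt{r\Delta^*}$. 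Thus the relevant threshold for interfering arrivals is naturally parametrised by $\sqrt{r\Delta^*}$ as $r$ runs from $\Delta$ down to $0$, which explains the appearance of $\rho(\sqrt{\Delta\Delta^*})$ and the integral over $x\in(0,\sqrt{\Delta\Delta^*})$.

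Concretely, I would decompose $R_{\bz}(\Delta,\Delta^*)$ into two contributions. First, the initial backlog of higher priority: the amount of already-present work with index below $\Delta\Delta^*$ is $u_{\bz}(\Delta\Delta^*)$, and while this is being cleared, new arrivals of size $y < \sqrt{\Delta\Delta^*}$ (the only ones that can jump ahead while the tagged job has not yet started) also get served first; by the standard busy-period/work-decomposition argument the time to clear this initial higher-priority work, inflated by these interfering arrivals, is $u_{\bz}(\Delta\Delta^*)/(1-\rho(\sqrt{\Delta\Delta^*}))$, giving the first term. Second, once the tagged job begins service, consider it at the instant its remaining size is $r$: it occupies a ``time slice'' $dr$ of its own service, but during any idle-for-the-tagged-job interval only arrivals with $y<\sqrt{r\Delta^*}$ preempt it, each bringing a sub-busy-period of mean length $\E{Y}/(1-\rho(\sqrt{r\Delta^*}))$-type inflation; integrating the local completion rate $1/(1-\rho(\sqrt{r\Delta^*}))$ of this slice over $r\in(0,\Delta)$ and then substituting $x=\sqrt{r\Delta^*}$ (so $dr = 2x\,dx/\Delta^*$ and the upper limit becomes $\sqrt{\Delta\Delta^*}$) yields $\frac{2}{\Delta^*}\int_0^{\sqrt{\Delta\Delta^*}} \frac{x\,dx}{1-\rho(x)}$, which is the second term. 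Adding the two contributions gives \eqref{eq:ER-sptp}.

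The main obstacle, and the step I would treat most carefully, is justifying the ``local'' argument in the second term: one must argue that the effective service rate experienced by the tagged job when its remaining size is $r$ is exactly $1-\rho(\sqrt{r\Delta^*})$ on average, i.e.\ that the only work that can delay it at that moment is (a) older jobs whose \emph{current} index is below $r\Delta^*$ and (b) new arrivals of size below $\sqrt{r\Delta^*}$, and that the combined effect of recursively-nested sub-busy-periods generated by such arrivals contributes the geometric factor $1/(1-\rho(\sqrt{r\Delta^*}))$. This requires the monotonicity of the SPTP index under service (so that a job never \emph{regains} priority over the tagged job), a careful accounting that any older job with index already below $\Delta\Delta^*$ has been counted in $u_{\bz}(\Delta\Delta^*)$ and does not reappear, and the observation that a job which arrives while the tagged job is waiting and has size $y\ge\sqrt{r\Delta^*}$ can never overtake the tagged job once the latter's remaining size has dropped to $r$ — so it is irrelevant to this slice. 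I would make this rigorous either by a direct work-conservation/renewal-reward computation on the sub-busy-period structure, or by taking the finite-support / discretised version (as suggested after Corollary~\ref{cor:sptp-opt}) and passing to the limit; the bookkeeping that the two terms do not double-count the initially present jobs is the delicate point.
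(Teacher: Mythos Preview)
Your proposal is correct and follows essentially the same approach as the paper: decompose the remaining sojourn time into a waiting phase (higher-priority backlog $u_{\bz}(\Delta\Delta^*)$ inflated by the busy-period factor $1/(1-\rho(\sqrt{\Delta\Delta^*}))$) and a service phase (each slice $dr$ of the tagged job's own service inflated by $1/(1-\rho(\sqrt{r\Delta^*}))$), then apply the substitution $x=\sqrt{r\Delta^*}$. The paper frames the second step via a ``virtual time'' axis with mini busy periods removed and an inhomogeneous Poisson stream of interrupting arrivals, but this is exactly the work-conservation/sub-busy-period argument you outline, and the monotonicity and non-double-counting points you flag as delicate are precisely the facts the paper is (implicitly) using.
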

\begin{figure}
  \centering
  \includegraphics[height=26mm]{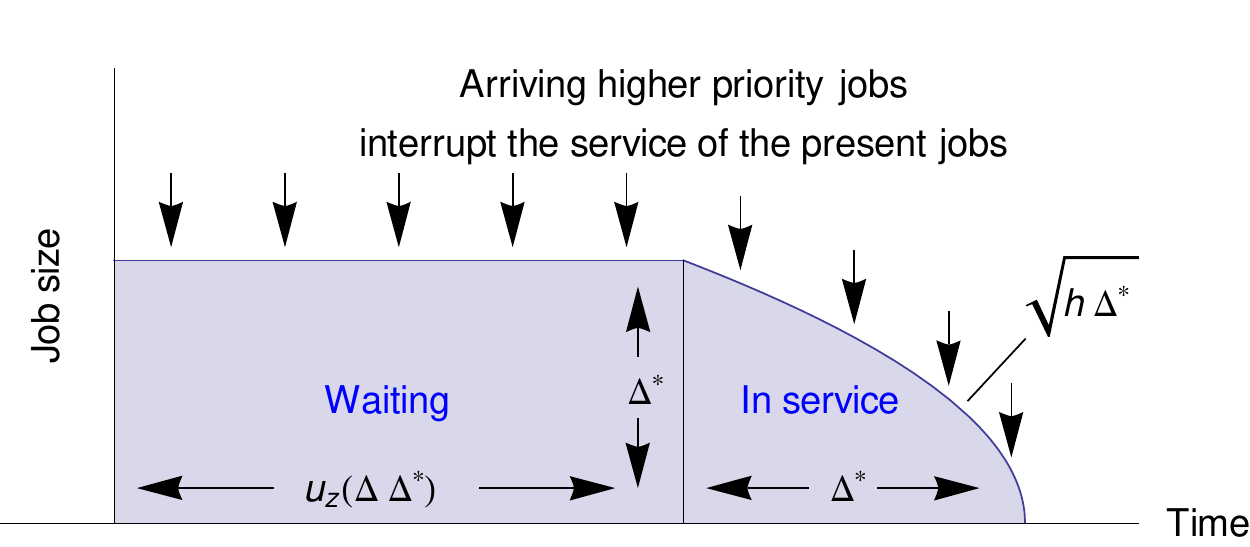}
  \caption{Remaining sojourn time of an $(\Delta,\Delta^*)$-job
    in an M/G/1-SPTP queue.}
\label{fig:ER-sptp}
\end{figure}
\begin{proof}
Let $k$ denote the $(\Delta,\Delta^*)$-job,
whose remaining sojourn time depends
on the initial and later arriving
higher priority work.
We can assume that the latter
are served immediately according to LIFO,
thus triggering mini busy periods.
Let $h=h(t)$ denote the remaining service time of job $k$
at (virtual) time $t$, where we have omitted these mini busy periods
from the time axis.
During $0<t<u_{\bz}(\Delta\Delta^*)$,
job $k$ is waiting and $h=\Delta$,
but as the service begins $h \to 0$ linearly.
The later arriving higher priority
jobs shorter than $\sqrt{h\Delta^*}$ constitute 
an inhomogeneous Poisson process with rate $\lambda(\sqrt{h\Delta^*})$,
as illustrated in Fig.~\ref{fig:ER-sptp}.
The mean duration of a mini busy period is (cf.\ the mean busy period in M/G/1),
$$
D(h) \triangleq \frac{ m(\sqrt{h\,\Delta^*}) }{1-\rho(\sqrt{h\,\Delta^*})}.
$$
The mean waiting 
time before job $k$ receives service 
for the first time is
$u_{\bz}(\Delta\Delta^*) +  \lambda(\sqrt{h\Delta^*})\,u_{\bz}(\Delta\Delta^*)\cdot D(\Delta)$,
which gives the first term in \eqref{eq:ER-sptp}.

The service time $\Delta$ and the additional delays %
during the service %
are on average 
$$
\Delta +
\int_0^{\Delta} \lambda(\sqrt{h\,\Delta^*}) \cdot D(h)\,dh;
$$
refer to the ``in service'' region of Fig.~\ref{fig:ER-sptp}.
Change of integration variable, $x=\sqrt{h\,\Delta^*}$ 
then gives the second term in \eqref{eq:ER-sptp}, which completes the proof.
\end{proof}

\begin{figure*}[th!]
\centering
\includegraphics[width=133mm]{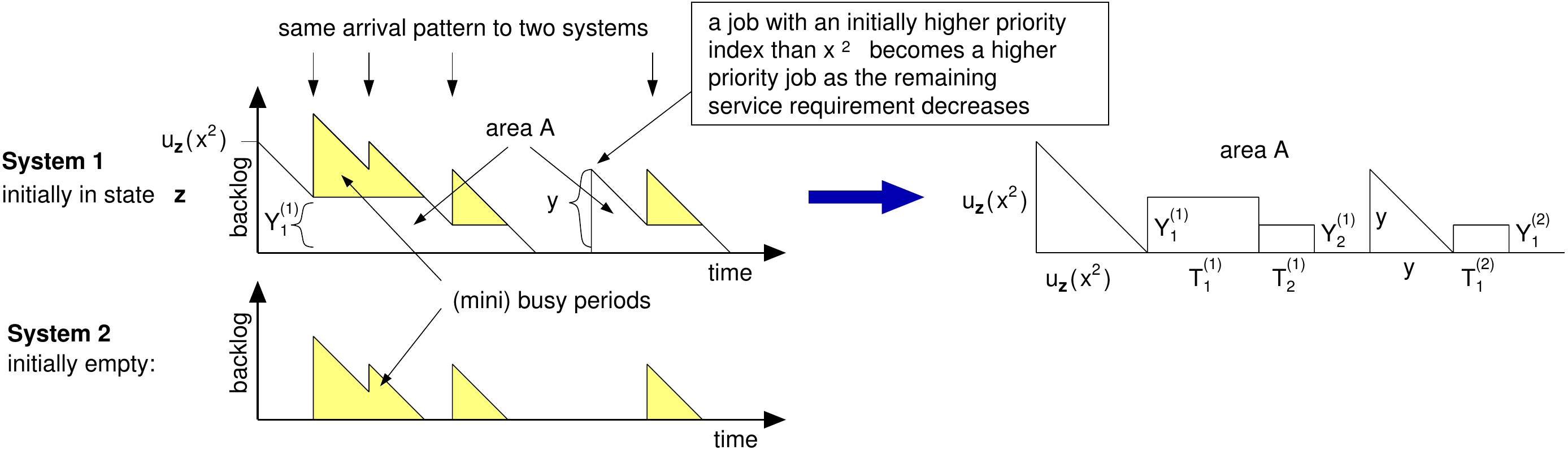}
\caption{Derivation of the value function for an M/G/1-SPTP queue with respect to slowdown.}
\label{fig:vn-sptp}
\end{figure*}
Even though SPTP explicitly tries to minimize the mean slowdown,
we derive next a general expression for the value function
with arbitrary job specific holding costs:
\begin{prop}%
For the size-aware relative value 
in an M/G/1-SPTP queue with respect to arbitrary
job specific holding costs 
it holds that
\begin{equation}%
\label{eq:vn-sptp}
\begin{array}{l@{\,}l}
\dps
v_{\bz}{-}v_0 &=\dps
 \sum_{i=1}^n \hc_i \biggl(
  \frac{\sum_{j=1}^{i-1} \Delta_j}{1{-}\rho(\tilde\Delta_i)} %
  + \frac{2}{\Delta_i^*} %
  \int_0^{\tilde\Delta_i} %
  \frac{x\,dx}{1{-}\rho(x)}\biggr)
  \\[4mm]
&\dps\;+\;
\frac{\lambda}{2} \sum_{i=0}^n
\biggl[
(\sum_{j=1}^i \Delta_j)^2 \int_{\tilde\Delta_i}^{\tilde\Delta_{i+1}} \frac{\hc(x)\,f(x)}{(1-\rho(x))^2}\,dx
 \; +
\\[4mm]
&\dps\qquad\quad
\sum_{j=i+1}^n (\Delta_j^*)^{-2} \int_{\tilde\Delta_i}^{\tilde\Delta_{i+1}} \frac{x^4\,\hc(x)\,f(x)}{(1-\rho(x))^2}\,dx
\biggr]
\end{array}
\end{equation}
where $\hc(x) \triangleq \cE{\HC}{\JS=x}$ and %
$$
\tilde\Delta_i = \left\{
\begin{array}{ll}
0, & i=0\\
\sqrt{\Delta_i\Delta_i^*}, & i=1,\ldots,n\\
\infty, & i=n+1.
\end{array}\right.
$$
\end{prop}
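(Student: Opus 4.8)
The plan is to mimic the two–system coupling argument used for FIFO and LIFO: compare System~1 starting in state $\bz$ with System~2 starting empty, under a common arrival stream, and compute $v_{\bz}-v_0$ as the total expected cost difference accumulated until the transient effect of $\bz$ has vanished. The subtlety with SPTP is that, unlike LIFO, the initial state \emph{does} affect later arrivals (through the higher-priority backlog they must wait behind), and, unlike FIFO, jobs interleave according to their $\Delta\Delta^*$ index rather than arrival order. So the cost difference splits into two pieces, exactly matching the two lines of \eqref{eq:vn-sptp}: (i) the holding cost of the $n$ jobs already present in $\bz$, and (ii) the extra holding cost suffered by \emph{future} arrivals because of the residual work in $\bz$.

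For piece (i), I would apply the Lemma: job $i$ in $\bz$ has remaining sojourn time with mean $\E{R_{\bz}(\Delta_i,\Delta_i^*)}$, but evaluated against the relevant higher-priority backlog, which for job $i$ (given the decreasing-index ordering) is $\sum_{j=1}^{i-1}\Delta_j$ rather than the full $u_{\bz}(\Delta_i\Delta_i^*)$; multiplying by $\hc_i$ and summing gives the first sum in \eqref{eq:vn-sptp} with $\tilde\Delta_i=\sqrt{\Delta_i\Delta_i^*}$. For piece (ii), the key is to track, for a tagged future arrival of size/cost $(x,\hc(x))$, how much \emph{longer} it stays in System~1 than in System~2. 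A future job with index $x^2$ (i.e.\ size $x$, so index $\sqrt{x\cdot x}=x$ in the $\tilde\Delta$ scale) is delayed by the portion of $\bz$ with higher priority than it, namely $\sum_{j=1}^{i}\Delta_j$ when $\tilde\Delta_i < x < \tilde\Delta_{i+1}$; each unit of that extra initial backlog inflates the tagged job's sojourn time, and — as in the FIFO derivation — also inflates it through the mini busy periods it spawns, giving a factor $1/(1-\rho(x))$, squared once for the delay itself and once because the delay is itself a busy-period-like quantity (hence $(1-\rho(x))^{-2}$ and the $(\sum_{j=1}^i\Delta_j)^2$). The remaining-service jobs $\Delta_j^*$ with $j>i$ contribute the second integral: a future arrival with index below $\tilde\Delta_j$ has higher priority and therefore delays job $j$, and the change of variable $x=\sqrt{h\Delta_j^*}$ (as in the Lemma's proof) converts the $h$-integral of that feedback into the $\int x^4 \hc(x) f(x)/(1-\rho(x))^2\,dx$ form with the $(\Delta_j^*)^{-2}$ prefactor. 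Integrating $x$ over $(0,\infty)$ and partitioning at the breakpoints $\tilde\Delta_i$ (where the set of $\bz$-jobs with higher priority changes) produces the sum over $i=0,\ldots,n$ and the stated integration limits; Fig.~\ref{fig:vn-sptp} is presumably the picture that organizes this bookkeeping.

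I expect the main obstacle to be piece (ii): correctly accounting for the \emph{second-order} effect whereby the extra backlog in System~1 not only delays a tagged future job directly but also, through the mini busy periods triggered by still-later higher-priority arrivals, amplifies that delay — and doing so consistently for both the $\sum\Delta_j$ contribution and the $\sum (\Delta_j^*)^{-2}$ contribution. One must be careful that these two contributions are genuinely additive (no double counting of the same future arrival) and that the $(1-\rho(x))^{-2}$ factors, the $x^4$ weight, and the change of variables all line up; a clean way to see this is to note that the cost rate contributed by future arrivals of index around $x$ is, by PASTA and a Little-type argument, proportional to the square of the relevant excess backlog times the density of such arrivals divided by $(1-\rho(x))^2$, which is precisely the integrand. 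The FIFO proposition is the degenerate case where $\rho(x)\equiv\rho$, $\Delta_i^*$ is irrelevant, and all $\tilde\Delta_i$ collapse, and checking that \eqref{eq:vn-sptp} reduces to \eqref{eq:vn-fifo} in that limit is a good sanity check I would perform before declaring the proof complete.
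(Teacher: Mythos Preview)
Your overall structure is right and matches the paper: couple System~1 (start in $\bz$) with System~2 (start empty), and write $v_{\bz}-v_0=h_1+h_2$ where $h_1$ is the holding cost of the $n$ existing jobs (obtained by summing $\hc_i\,\E{R_{\bz}(\Delta_i,\Delta_i^*)}$ from the Lemma) and $h_2$ is the extra cost borne by future arrivals. Your treatment of $h_1$ and of the first integral in $h_2$ (the $(\sum_{j=1}^i\Delta_j)^2$ term) is essentially correct, though the justification for the $(1-\rho(x))^{-2}$ factor is vague; in the paper one $1/(1-\rho(x))$ converts the higher-priority backlog $U_{\bz}(x^2,t)$ into waiting time, and the second comes from the area formula for a triangle-plus-mini-busy-period sequence.

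However, your explanation of the second integral (the $\sum_{j>i}(\Delta_j^*)^{-2}\int x^4\cdots$ term) has the mechanism backwards, and this is a genuine gap. You write that a future arrival with index below $\tilde\Delta_j$ ``has higher priority and therefore delays job $j$''. But the delay that future arrivals impose on existing job $j$ is already fully accounted for in $h_1$ via the Lemma; if you also put it in $h_2$ you are double-counting, and in any case $h_2$ is by definition the cost to \emph{future} jobs, not to job $j$. The correct source of the $x^4(\Delta_j^*)^{-2}$ term is the opposite direction: job $j$ with $\tilde\Delta_j>x$ has \emph{lower} priority than a size-$x$ arrival initially, but as it is served its index $\Delta_j(t)\Delta_j^*$ decreases, and at the moment it reaches $x^2$ its remaining work is exactly $y_j=x^2/\Delta_j^*$. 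From then on this $y_j$ of work is higher-priority than any size-$x$ arrival and spawns its own triangle-plus-rectangles sequence in the $U_{\bz}(x^2,t)-U_0(x^2,t)$ picture, contributing mean area $y_j^2/(2(1-\rho(x)))=x^4/(2(\Delta_j^*)^2(1-\rho(x)))$. Summing over $j$ with $\tilde\Delta_j>x$ and integrating against $\lambda\,\hc(x)f(x)/(1-\rho(x))$ gives the second integral. Without this ``priority-crossing'' observation your argument cannot produce the $x^4$ term, and the change of variable $x=\sqrt{h\Delta_j^*}$ you invoke from the Lemma is not the step that generates it here.
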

\begin{proof}
The relative value comprises the mean holding cost 
$h_1$ accrued by the current $n$ tasks in state $\bz$,
$$\bz=((\Delta_1,\Delta_1^*,\hc_1), \ldots, (\Delta_n,\Delta_n^*,\hc_n)),$$ 
and the difference in costs accrued by the later arriving jobs, $h_2$,
$v_{\bz}-v_0 = h_1+h_2$.
The first summation over $n$ gives $h_1$,
i.e., it follows from
multiplying \eqref{eq:ER-sptp} with the job specific holding cost $\hc_i$
and adding over all $i$.

The latter integral corresponds to 
$h_2$
where the accrued costs are conditioned on the size $x$ of an arriving job.
The corresponding arrival rate is $\lambda\,f(x)\,dx$. 
The mean difference in sojourn time experienced by an arriving job with size $x$
is accrued during the initial waiting time -- once a job enters the service for the first time
its mean remaining sojourn time is the same in both systems.
The initial waiting time is a function of higher priority workload upon arrival,
denoted by $U_{\bz}(x^2,t)$ for an initial state $\bz$ at arrival time $t$.
In particular, the mean (initial) waiting time is simply
$\E{U_{\bz}(x^2,t)}/(1-\rho(x))$,
which gives
$$
h_2 = \lambda \int_{0}^{\infty} \frac{f(x)}{1-\rho(x)} \,(\E{U_{\bz}(x^2,t)-U_{0}(x^2,t)})\,dx.
$$
Next we refer to Fig.~\ref{fig:vn-sptp} and observe that %
$U_{\bz}(x^2,t)-U_{0}(x^2,t)$
in the integrand corresponds to area A. %
This area consists of one $u_{\bz}(x^2)$ triangle, followed by
$N_0$ rectangles, and similar sequences starting with 
a $x^2/\Delta_i^*$ triangle for $\{i\,:\,\Delta_i\Delta_i^*>x^2\}$,
each followed by $N_i$ rectangles.
The $N_i$ are random variables corresponding to the number
of mini busy periods during the service time of a particular triangle.
The first triangle corresponds to the initially higher priority workload $u_{\bz}(x^2)$,
and the latter %
to the jobs with
initially lower priority, i.e., to jobs $i$ with $\Delta_i\Delta_i^*>x^2$.
At some point in time, 
the remaining service requirement of such a job $i$
decreases to $y_i=x^2/\Delta_i^*$ and its priority index drops below $x^2$.
For a sequence starting with a $y$-triangle,
the number of mini busy periods (rectangles) obeys Poisson distribution with 
mean %
$\lambda(x)\,y$.
The height of a rectangle is uniformly distributed in $(0,y)$ having the mean $y/2$
(property of Poisson process), while the width corresponds to the duration of
a busy period in a work conserving M/G/1 queue having the mean $m(x)/(1-\rho(x))$.
Thus, the mean total area is
\begin{align*}
\frac{y^2}{2} + \lambda(x) y \cdot \frac{y}{2} \cdot \frac{m(x)}{1-\rho(x)}
  = \frac{y^2}{2(1-\rho(x))}.
\end{align*}
Therefore,
\begin{align*}
h_2 &=
\frac{\lambda}{2} \int\limits_{0}^{\infty} \frac{\hc(x)\,f(x)}{(1-\rho(x))^2}\,\biggl(
u_{\bz}(x^2)^2 + 
 \sum_{i:\Delta_i\Delta_i^*>x^2} \left(\frac{x^2}{\Delta_i^*}\right)^2\biggr)\,dx,
\end{align*}
where %
$\hc(x) = \cE{\HC}{\JS=x}$ factor in the numerator corresponds
to the mean holding cost of an $x$-job.
For each interval $x \in (\tilde\Delta_i,\tilde\Delta_{i+1})$,
$u_{\bz}(x^2) = \sum_{j=1}^{i-1} \Delta_j$ and
$$
\sum_{j:\Delta_j\Delta_j^*>x^2} \left(\frac{x^2}{\Delta_j^*}\right)^2
= x^4 \sum_{j=i+1}^n (\Delta_j^*)^{-2},
$$
and integration in $n+1$ parts completes the proof.
\end{proof}

Recall that with respect to the mean sojourn time, $\hc_i=1$ and $\hc(x)=1$, while
for the slowdown criterion, $\hc_i=1/\Delta_i^*$ and $\hc(x)=1/x$.
Hence, \eqref{eq:vn-sptp} allows one to compute
the mean cost $\cost_{\bz}(x)$
due to accepting a given job in terms of sojourn time or slowdown.
We omit the explicit expression for $\cost_{\bz}(x)$
for brevity. %

The corresponding value functions for SPT and SRPT are derived in the Appendix.
The numerical evaluation of the value function 
for SPTP, SPT and SRPT
is not as unattractive as it first may seem.
Basically one needs to be able to compute integrals of form
$$
F_k(x)=\int_{0}^{x} \frac{t^k\,\hc(t)\,f(t)}{(1-\rho(t))^2}\,dt,
\text{ and }
G_d(x)=\int_{0}^{x} \frac{t^d}{1-\rho(t)}\,dt,
$$
where $k=0,2,4$ and $d=0,1$
depending on the discipline. %
Thus, 
e.g., a suitable interpolation of the $F_k(x)$
and $G_d(x)$
enables on-line computation of the value function.

\section{Dispatching Problem}
\label{sec:dispatching}

Next we utilize the results of Section~\ref{sec:values}
in the dispatching problem with parallel servers
and focus solely on minimizing the mean slowdown.
The dispatching system 
illustrated in Fig.~\ref{fig:dispatch-m}
comprises $m$ servers with 
service rates $\nu_1,\ldots,\nu_m$.
Jobs arrive according to a Poisson process with rate $\lambda$ and
their service requirements are i.i.d.\ random variables with
a general distribution.
The jobs are served according to a given scheduling discipline (e.g., FIFO) in each server.

The slowdown for an isolated queue was defined as the sojourn time $\ST$
divided by the service requirement $\JS$ \cite{harchol-balter-peva-2002} (both measured in time),
$\gamma = \ST/\JS$.
However, as we consider heterogeneous servers with rates $\nu_i$,
the service requirement $\JS$ of size $Y$ job (measured, e.g., in bytes)
is no longer unambiguous but a server specific quantity,
$$
\JS_i = Y / \nu_i.
$$
Therefore, for a dispatching system we compare the sojourn time $\ST$ to the hypothetical
service time if all capacity could be assigned to process a given job \cite{downey-hpdc-1997},
$$
\gamma^* \triangleq \frac{\ST}{Y / \sum_i \nu_i}.
$$
The relationship between the queue $i$ specific slowdown $\gamma$
and the system wide slowdown $\gamma^*$ is
$
\gamma^* = \gamma \cdot (\sum_j \nu_j)/\nu_i$.

\subsection{Random Dispatching Policies}
The so-called
Bernoulli splitting assigning jobs independently
in random using probability distribution $(p_1,\ldots,p_m)$
offers a good state-independent basic dispatching policy. %
Due to Poisson arrivals, each queue also receives
jobs according to a Poisson process with rate $p_i\,\lambda$.

\begin{defn}[RND-$\rho$]
The \emph{RND-$\rho$ dispatching policy}
balances the load equally by setting $p_i = \nu_i / \sum_j \nu_j$.
\end{defn}

As an example, the mean slowdown in a preemptive LIFO 
or PS queue with RND-$\rho$ 
policy is, %
$$
\E{\gamma^*}= \bigl(\sum_j \nu_j\bigr) \frac{m}{\sum_j \nu_j - \lambda\,\E{Y}},
$$
where the denominator corresponds to the excess capacity.
Thus, with RND-$\rho$ and LIFO queues,
the slowdown criterion is $m$ times
higher when $m$ servers are used instead of a single fast one
irrespectively of the service rate distribution $\nu_i$.

For identical servers, $\nu_1=\nu_2=\ldots=\nu_m$,
the RND-$\rho$ dispatching policy reduces to RND-U:

\begin{defn}[RND-U]
The RND-U dispatching policy assigns jobs
randomly in uniform using $p_i = 1/m$.
\end{defn}

RND-U is obviously the optimal random policy in case of identical servers.

In general, the optimal splitting probabilities depend on the service time
distribution and the scheduling discipline.
For the preemptive LIFO (and PS) server systems
the mean slowdown 
with a random dispatching policy is given by
$$
\E{\gamma^*}= \bigl(\sum_j \nu_j\bigr)\sum_{i=1}^m \frac{p_i}{\nu_i-p_i\cdot\lambda\,\E{Y}},
$$
where $\sum_j \nu_j$ is a constant that can be neglected
when optimizing the $p_i$.
\begin{defn}[RND-opt]
The optimal %
random dispatching policy for LIFO/PS queues,
referred to as RND-opt, splits the incoming tasks
using the probability distribution,
$$
p_i = \frac{\nu_i-\sqrt{\nu_i}\,G}{\lambda\,\E{Y}},
\quad
\text{where }
G = \frac{\sum_i \nu_i - \lambda\,\E{Y}}{\sum_i \sqrt{\nu_i}}.
$$
\end{defn}
The result is easy to show with the aid of Lagrange multipliers.
We note that when some servers are too slow, the above
gives infeasible values for some $p_i$, 
in case of which one simply excludes
the slowest server from the solution and re-computes a new probability
distribution. This is repeated until a feasible solution is found.
A more explicit formulation is given in \cite{bell-management-1983,haviv-orl-2007}.
RND-opt is insensitive to the job size distribution
and optimal only for LIFO and PS. %

Pollaczek-Khinchin mean value formula enables one to analyze FIFO queues
and, e.g., to write an expression for the mean slowdown with RND-$\rho$.
Also the optimal splitting probabilities can
be computed numerically for an arbitrary job size distribution.
According to \eqref{eq:fifo-slowdown}, the mean slowdown in each
queue depends on the mean waiting time $\E{W_i}$ and $\E{\JS_i^{-1}}$,
where the latter is assumed to exist and to be finite.
Hence, the optimal probability distribution with respect to slowdown
is the same as with the mean sojourn time.

Similarly, the mean sojourn time in SPT and SRPT queues is known
which allows a numerical optimization of the splitting probabilities.
For simplicity, we consider only RND-$\rho$ and RND-opt in this paper
as compact
closed form expressions for the $p_i$ are only available for these.

\subsection{SITA-E Dispatching Policy}

With FIFO queues, a state-independent dispatching
policy known as the size-interval-task-assignment (SITA)
has proven to be efficient especially with heavy-tailed
job size distributions \cite{crovella-sigmetrics-1998,harchol-balter-pdc-1999,bachmat-peva-2009}.
The motivation behind SITA is to segregate the long jobs from the short ones.
Reality, however, is more complicated and 
segregating the jobs categorically can also
give suboptimal results
\cite{harchol-balter-sigmetrics-2009}.
Here we assume $\nu_1 \ge \nu_2 \ge \ldots \ge \nu_m$
and a continuous job size distribution with pdf $f(x)$.
\begin{defn}[SITA]
A SITA policy is
defined by disjoint job size intervals
$\{(\xi_0,\xi_1]$, $(\xi_1,\xi_2]$, $\ldots$, $(\xi_{m-1},\xi_m]\}$,
and assigns a job with size $x$ to server $i$ iff
$x \in (\xi_{i-1},\xi_i]$.
\end{defn}

Without loss of generality, one can assume that $\xi_0=0$ and $\xi_m=\infty$.
Note that in contrast to random policies, SITA 
assumes that the dispatcher is aware of the size of the new job.
In this paper, we limit ourselves to SITA-E, where E
stands for \emph{equal load}. That is, the size intervals
are chosen in such a way that the load is balanced %
between the servers. %

\begin{defn}[SITA-E]
With SITA-E dispatching policy the thresholds $\xi_i$
are defined in such a way that
$$
\frac{1}{\nu_i} \int_{\xi_{i-1}}^{\xi_i} x\,f(x)\,dx
=
\frac{1}{\nu_j} \int_{\xi_{j-1}}^{\xi_j} x\,f(x)\,dx,
\qquad
\forall\;i,j.
$$
\end{defn}

Thus, similarly as RND-$\rho$,
also the SITA-E policy is insensitive to the arrival rate $\lambda$.

\subsection{Improved Dispatching Policies}

The important property the above state-independent dispatching policies
have is that the arrival process to each queue 
is a Poisson process.
Consequently, 
the value functions derived earlier allow us to quantify
the value function of the whole system.
With a slight abuse of notation,
$$
v_{\bZ}-v_{\mathbf{0}} = \sum_{i=1}^m (v_{\bz_i}^{(i)}-v_0^{(i)}),
$$
where 
$v_{\bz_i}^{(i)}$ denotes the relative value of queue $i$ in state $\bz_i$,
and $\bZ=(\bz_1,\ldots,\bz_m)$ the state of the
whole system.

\subsubsection*{Policy Improvement by Role Switching}
One interesting opportunity to utilize the relative values 
is to \emph{switch} the roles of two queues \cite{hyytia-ejor-2012}.
Namely, with an arbitrary state-independent policy
one can switch the input processes of any two identical servers at any moment,
and effectively end up to a new state of the same system. 
Despite its limitation (identical servers), 
\emph{switching} is an interesting policy improvement method
that requires little additional computation.
The switch should only
be carried out when the new state has
lower expected future costs, i.e.\ when
for some $i\ne j$, the rates are equal, $\nu_i=\nu_j$, and
$$
v_{(\bz_1,..,\bz_i,..\bz_j,..,\bz_m)}
<
v_{(\bz_1,..,\bz_j,..\bz_i,..,\bz_m)}.
$$
Carrying out this operation whenever a new job
has arrived to the system %
reduces the state-space by \emph{removing} such states
for which a better alternative to continue exists.
Formally,
let $\pi(\bZ)$ denote the set of feasible permutations
of the queues' roles, where the input processes between
identical servers are switched. %
Then, the optimal state-space reduction,
implicitly defining a new policy,
is given by
$$
\bZ \leftarrow \argmin{\bZ' \in \pi(\bZ)}\; v_{\bZ'}.
$$

To elaborate this, consider a RND-$\rho$/LIFO system.
In this case,
\eqref{eq:vn-plifo} implies that switching
the roles of any two identical queues makes
no difference to the relative value.

In contrast, with FIFO discipline the interesting quantity
from \eqref{eq:vn-fifo} for two identical servers $i$ and $j$ is identified to be
$$
C_{ij} \triangleq \frac{\lambda_i\,\E{\HC_i}}{1-\rho_i} \cdot u_{\bz_j}^2.
$$
Switching the input processes between queues $i$ and $j$ 
leads to a state with a lower relative value if
$C_{ii} + C_{jj} > C_{ij} + C_{ji}$.
Again, with RND-$\rho$ the factor 
$\lambda_i\,\E{\HC_i} / (1-\rho_i)$ is a constant
for any two identical servers and switching provides no gain.
However, with SITA-E, %
even though the denominator $1-\rho_i$ is
a constant, the numerator is not.
Let $\JS_i$ denote a job size in queue $i$.
Then $\E{\JS_i} < \E{\JS_{i+1}}$,
and therefore $\lambda_i > \lambda_{i+1}$.
Moreover, $\E{1/\JS_i} > \E{1/\JS_{i+1}}$,
yielding
$$
\lambda_i\,\E{1/\JS_i} > \lambda_{i+1}\,\E{1/\JS_{i+1}}.
$$
Consequently, with the \emph{SITA-E with switch} policy,
the optimal permutation, after inserting a new job to a queue,
is the one with increasing backlogs for identical servers:
\begin{defn}[SITA-Es]
SITA-E with switch dispatching policy behaves similarly as SITA-E with a distinction
that after each task assignment
the queues are permutated in such a way that
$u_{\bz_{i}} \le u_{\bz_{i+1}}$ for all identical servers $i$ and $i+1$.
\end{defn}

\subsubsection*{Policy Improvement by FPI}
The \emph{first-policy-iteration} (FPI)  %
is a general method of the MDP framework to improve any given policy.
We apply %
it to the dispatching problem
\cite{hyytia-ejor-2012}. 
Suppose that the scheduling discipline in each queue is fixed
and that a state-independent basic dispatching policy would assign
a new job to some queue.
Given the relative values and the expected cost associated 
with accepting the job to each queue,
we can carry out FPI:
we deviate from the default action
if the expected cost is smaller with some other action,
thereby decreasing the expected cumulative costs in infinite time horizon. 

Let $\lambda_i$ 
denote the arrival rate to queue $i$ according to the basic dispatching policy,
and $Y_i$ the corresponding job size. %
With a state-independent policy,
accepting a job to queue $i$ 
does not affect the future behavior of the other queues. %
Thus, the cost of assigning a job with size $y$ to queue $i$ is
$$
\cost_{\bZ}(y,i) = \cost^*_{\bz_i}(y).
$$
where $\cost^*_{\bz_i}(y)$ is the mean admittance cost of a job with size $y$ (measured, e.g., in bytes)
to queue $i$, where its service requirement (measured in time) would be $y/\nu_i$.
For slowdown, we have an elementary relation
$$
\cost^*_{\bz_i}(y) = \frac{\sum_j \nu_j}{\nu_i} \cdot \cost_{\bz_i}(y/\nu_i).
$$
For \emph{FIFO queues}, \eqref{eq:slowdown-fifo} gives
$$
\cost^*_{\bz_i}(y) = \bigl(\sum_j \nu_j\bigr)\left(
\frac{1}{\nu_i} {+} \frac{u_{\bz_i}}{y} {+} \frac{\lambda_i\,\E{Y_i^{-1}}}{2}\cdot\frac{2 u_{\bz_i}y + y^2/\nu_i}{\nu_i-\lambda_i\,\E{Y_i}}
\right).
$$
For \emph{preemptive LIFO} queues, %
\eqref{eq:slowdown-lifo} similarly gives
$$
\cost^*_{\bz_i}(y) = \bigl(\sum_j \nu_j\bigr) \frac{1+(y/\nu_i)\sum_{j=1}^{n_i} 1/\Delta^*_{i,j}}{\nu_i-\lambda_i\,\E{Y_i}},
$$
where $n_i$ denotes the number of jobs in queue $i$ and
$\Delta^*_{i,j}$ the initial service requirement (in time) of job $j$ in queue $i$.
According to the FPI principle, one
simply chooses the queue with the smallest (expected) cost,
\begin{equation}\label{eq:fpi}
\alpha_{\bZ}(y) \triangleq \argmin{i} \;\cost^*_{\bz_i}(y).
\end{equation}
We refer to these improved dispatching policies simply as the \emph{FPI-$p$ policy},
where $p$ denotes the basic dispatching policy, e.g., RND-opt or RND-$\rho$,
where for brevity reasons the RND prefix is often omitted.
Note that the factor $\sum_j \nu_j$ in both expressions
is a common constant for all queues.

\begin{table*}[th]
\centering
\small
\begin{tabular}{|lp{130mm}|}
\hline
&\\[-2mm]
\poli{RND}-$\rho$ & state-independent random policy with load balancing\\
\poli{RND-opt}    & state-independent random policy minimizing the mean slowdown (assumes LIFO/PS)\\
\poli{SITA-E}     & state-independent size-interval-task-assignment with equal loads (optionally, with switch)\\
\hline
&\\[-2mm]
\poli{Round-Robin} & assigns arriving tasks sequentially to servers \cite{ephremides-tac-1980}\\
\poli{LWL$^-$}        & least-work-left, assigns a job to server with the least amount unfinished work upon arrival \\
\poli{LWL$^+$}        & same as \poli{LWL$^-$} but based on the unfinished work (in time) including the new job \cite{hyytia-ejor-2012}\\
\poli{JSQ}         & join-the-shortest-queue, i.e., the one with the least number of jobs \cite{winston-applied-1977}\\
\poli{Myopic}      & minimize the mean slowdown on condition that no further jobs arrive\\
\poli{FPI}         & first policy iteration on the state-independent RND-$\rho$, RND-opt and SITA-E policies\\
\hline
\end{tabular}
\caption{Dispatching policies evaluated in the numerical examples.}
\label{tbl:policies}
\end{table*}

In a symmetric case of $m$ identical servers and the RND-U basic policy,
a corresponding FPI-based policy reduces to a well-known dispatching policy,
i.e.,
with FIFO queues, FPI yields \poli{LWL},
and with LIFO, the \poli{Myopic} policy (see Table~\ref{tbl:policies}).
Moreover, given additionally a constant job size, then in case of LIFO
queues one ends up with the JSQ policy.
In general, for constant job sizes the slowdown objective
reduces to the minimization of the mean sojourn time.

\begin{figure}
\centering
\begin{tabular}{cc}
\parbox{36mm}{\includegraphics[width=36mm]{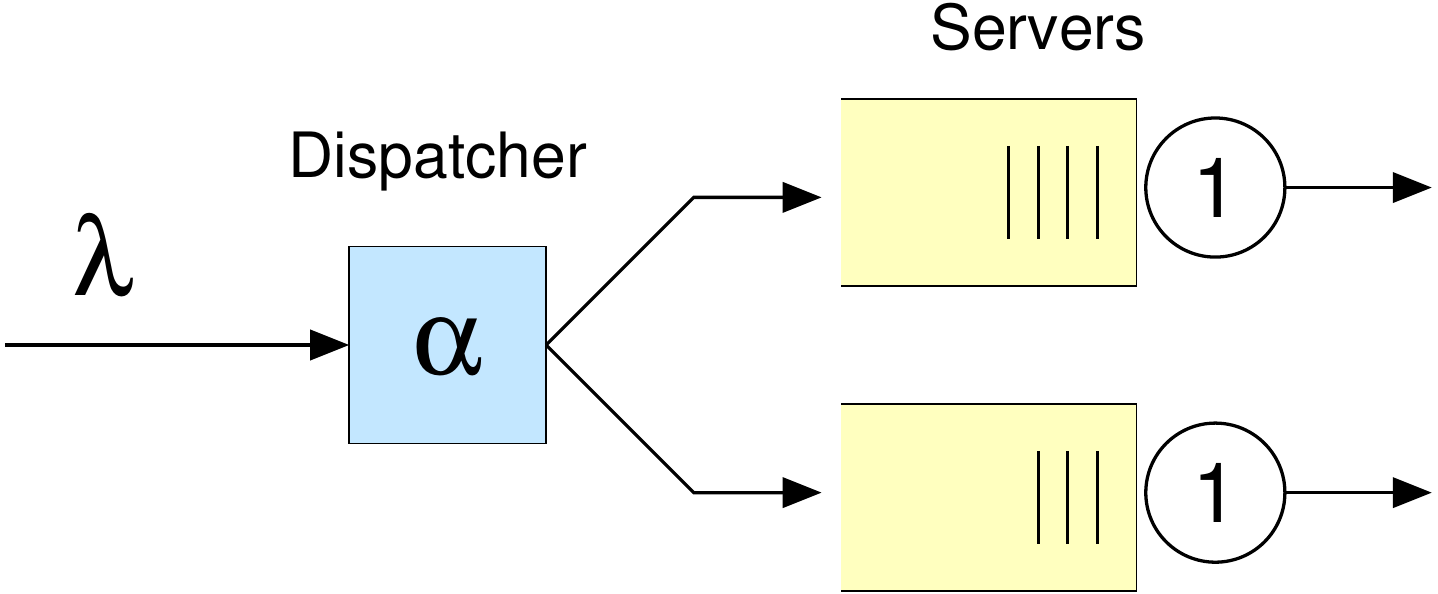}} &
\parbox{36mm}{\includegraphics[width=36mm]{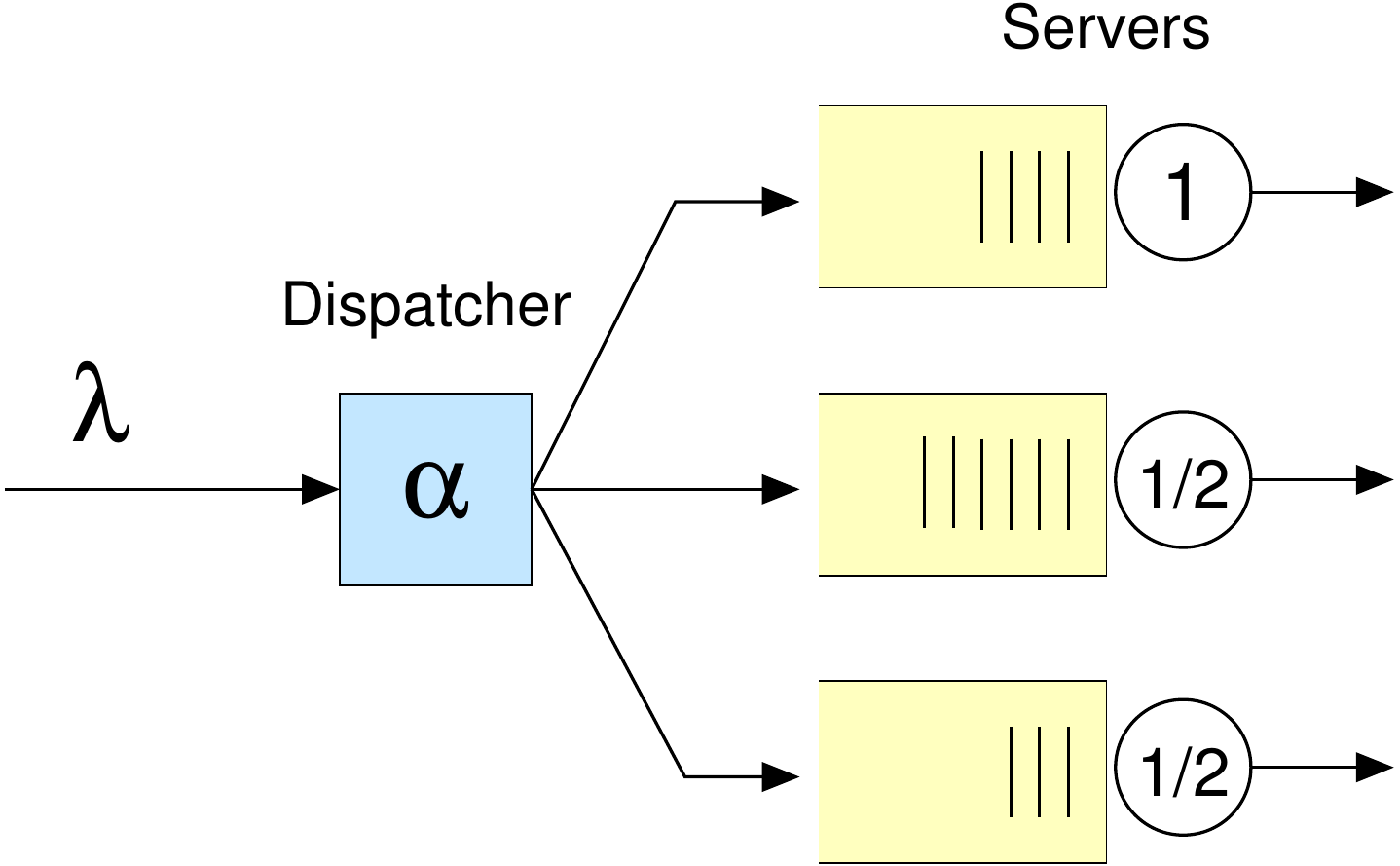}} \\[12mm]
\small (a) Two identical servers &
\small (b) Three heterogeneous servers
\end{tabular}
\caption{Example dispatching systems.}
\label{fig:dispatch-3}
\end{figure}

\section{Numerical Examples}
\label{sec:examples}

Let us next evaluate by means of numerical simulations
the improved dispatching policies derived in Section~\ref{sec:dispatching}.
To this end, we have chosen to consider two elementary server systems that are illustrated in Fig~\ref{fig:dispatch-3}:
\begin{itemize}\parskip-0pt
\item[(a)] Two identical servers with rates $\nu_1=\nu_2=1$,
\item[(b)] Three heterogeneous servers with rates $\nu_1=1$ and $\nu_2=\nu_3=1/2$.
\end{itemize}
Thus, the total service rate in both systems is equal to $2$.
We compare the mean slowdown performance of the FPI policies 
against several
well-known heuristic dispatching policies,
including 
the state-independent policies
\poli{RND}-opt, \poli{RND}-$\rho$ and \poli{SITA-E}.
The somewhat more sophisticated least-work-left (LWL) policies
choose the queue with the least amount of unfinished work (backlog).
The difference between the LWL policies is 
that \poli{LWL$^-$} considers the situation without the new job,
and \poli{LWL$^+$} afterwards. With identical servers the LWL policies are equivalent.
\poli{JSQ} chooses the queue with the least number of jobs.
With the \poli{LWL} and \poli{JSQ}, the ties are broken in favor of a faster server.
The \poli{Myopic} policy assumes in a greedy fashion
that no further jobs arrive and chooses the queue which minimizes
the immediate cost the known jobs accrue.
All policies are listed in Table~\ref{tbl:policies}.

In many real-life settings, the job sizes
have been found to exhibit a heavy-tailed behavior,
(cf., e.g., file sizes in the Internet).
Thus, in most examples we assume a \emph{bounded Pareto distribution} with pdf
\begin{equation}\label{eq:pareto}
f(x) = \frac{\alpha k^\alpha}{1-(k/p)^\alpha}\,x^{-\alpha-1},
\quad k \le x \le p,
\end{equation}
where $(k,p,\alpha)\,{=}\,(0.33959,1000,1.5)$
so that $\E{X} \,{\approx}\, 1$.
This job size distribution is particularly suitable for SITA-E.

\begin{figure*}
\centering
\includegraphics[width=70mm]{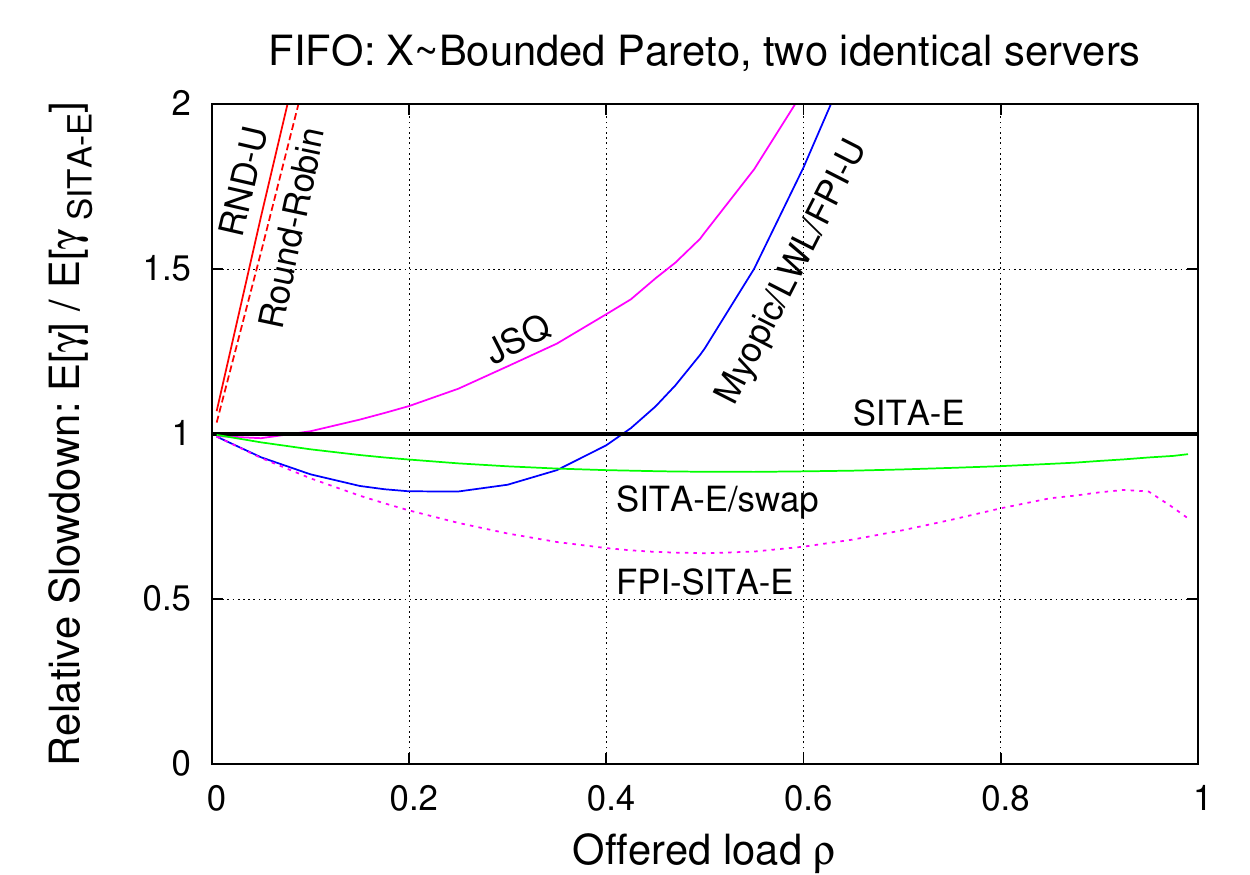}
\hspace{20mm}
\includegraphics[width=70mm]{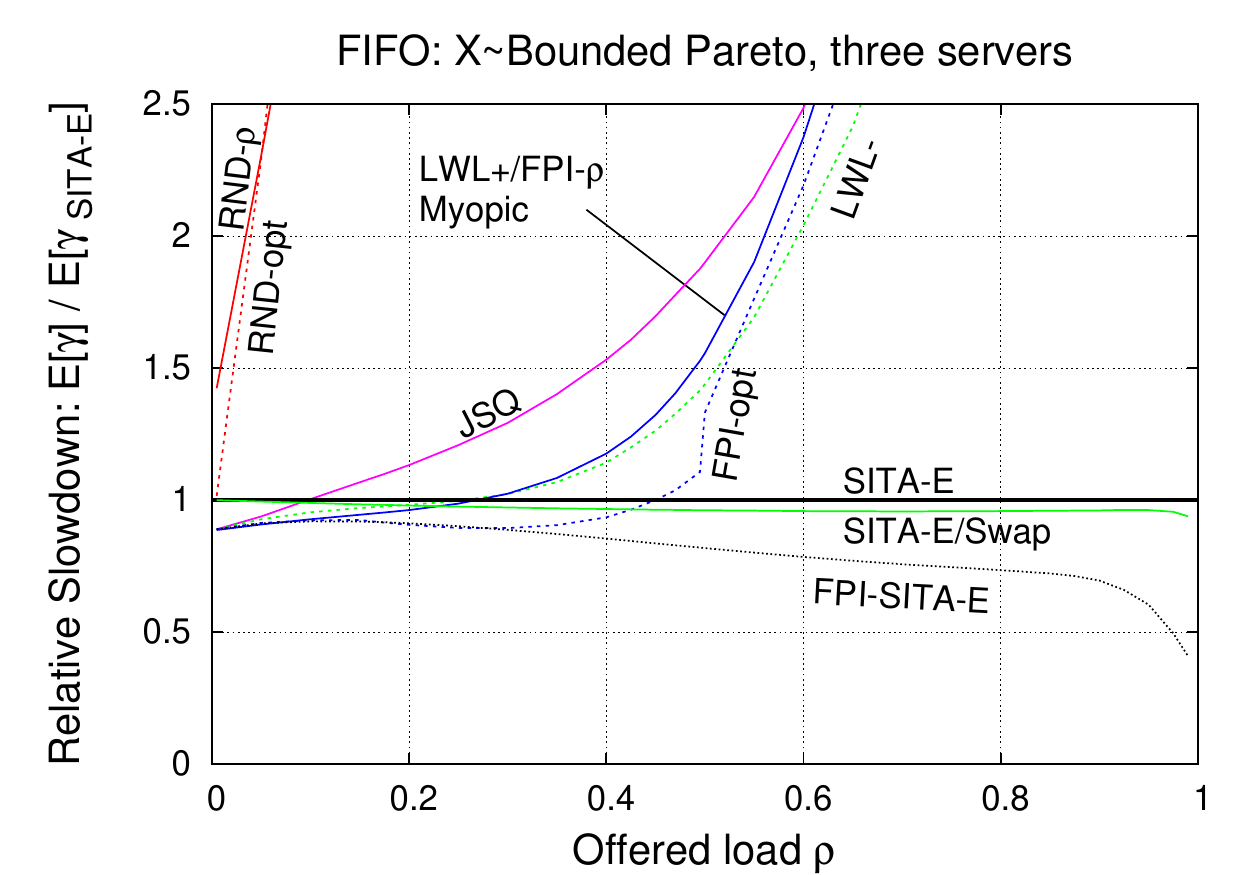}
\caption{Numerical results with two identical (left) and
three heterogeneous servers having rates $\mathbf{\boldsymbol{\nu}=(1,0.5,0.5)}$ (right).
The scheduling discipline is FIFO and jobs sizes obey bounded Pareto distribution.}
\label{fig:fifo-results}
\end{figure*}

\subsection{FIFO}

First we assume
that servers %
operate under the FIFO scheduling discipline
and job sizes obey the bounded Pareto distribution \eqref{eq:pareto}.
With FIFO, the departure time gets fixed at dispatching
and the Myopic policy %
corresponds to
selfish users choosing the queue that guarantees the shortest sojourn time,
i.e., LWL$^+$.
Similarly, LWL$^-$ is equivalent to an M/G/$m$ system with a single shared queue.

\subsubsection*{Two identical servers}

Fig.~\ref{fig:fifo-results} (left) depicts the results with two identical servers
with rates
$\nu_1=\nu_2=1$.
The $x$-axis corresponds to the offered load $\rho$ and
the $y$-axis to the relative mean slowdown, $\E{\gamma}/\E{\gamma_{\mathrm{SITA-E}}}$,
i.e., the comparison is against SITA-E.
We find that 
the policies appear to fall in one of the three groups with respect to slowdown:
SITA policies form the best group, other queue state-dependent policies the next,
and RND-U and Round-Robin have the worst performance.
Especially, FPI-SITA-E outperforms the other policies by
a significant margin, including the other SITA policies.
Note also that when $\rho$ is small, the sensible state-dependent policies utilize
idle servers better than, e.g., SITA-E.

Fig.~\ref{fig:fpi-sita-e}
illustrates SITA-E and FPI-SITA-E in the same setting 
at an offered load of $\rho=0.5$.
The $x$- and $y$-axes correspond to the backlog in Queue 1 and Queue 2, respectively, and
the $z$-axis to the maximum job size %
a given policy assigns to Queue 1.
With SITA-E, this threshold is constant, while FPI-SITA-E changes the threshold dynamically.
One observes that as a result of FPI, 
Queue 1 has become a ``high-priority'' queue where no jobs are accepted whenever 
Queue 1 has more unfinished work than Queue 2.
Similarly, when Queue 2 
has a long backlog,
the threshold for assigning a job to Queue 1 becomes higher with FPI %
than %
with SITA-E.

\subsubsection*{Three heterogeneous servers}

Consider next the asymmetric setting
comprising one primary server with service rate $\nu_1=1$ and two secondary servers
with rates $\nu_2=\nu_3=1/2$.
Fig.~\ref{fig:fifo-results} (right) illustrates the results from a numerical simulation.
\begin{figure}
\centering
\includegraphics[width=60mm]{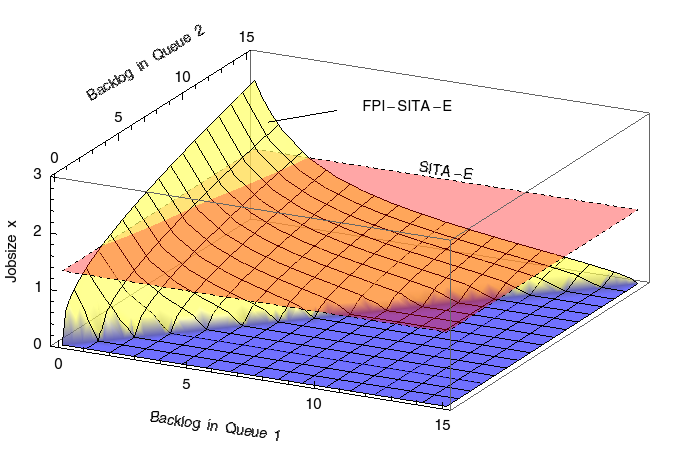}
\caption{SITA-E and FPI-SITA-E policies in the setting of
  two identical FIFO queues, bounded Pareto distributed job sizes and
  offered load $\mathbf{\boldsymbol{\rho}=0.5}$.
  $\mathbf{x}$- and $\mathbf{y}$-axes correspond to the backlog in Queue 1 and Queue 2, respectively, and
  $\mathbf{z}$-axis to the maximum job size that a policy assigns to Queue 1.}
\label{fig:fpi-sita-e}
\end{figure}

\begin{figure*}
\centering
\includegraphics[width=72mm]{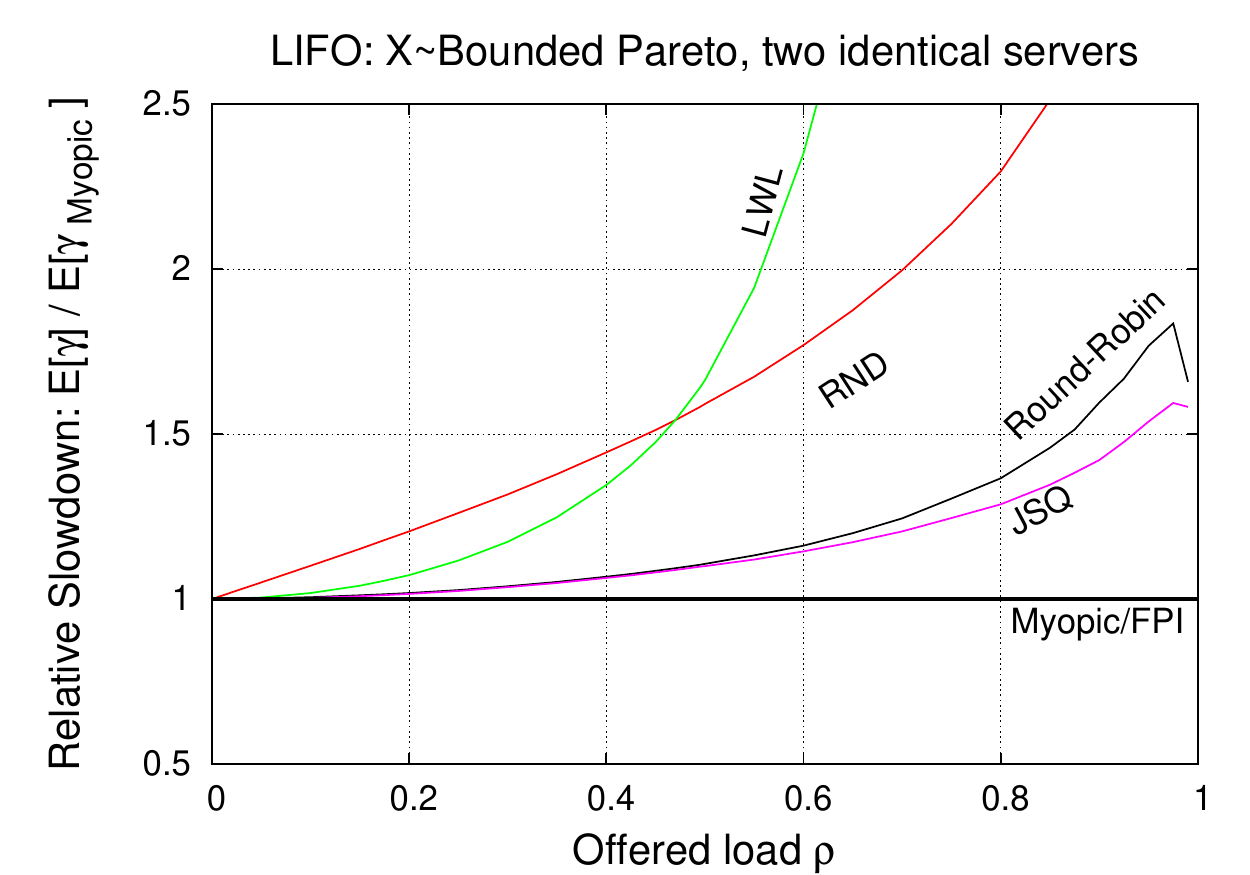}
\hspace{20mm}
\includegraphics[width=72mm]{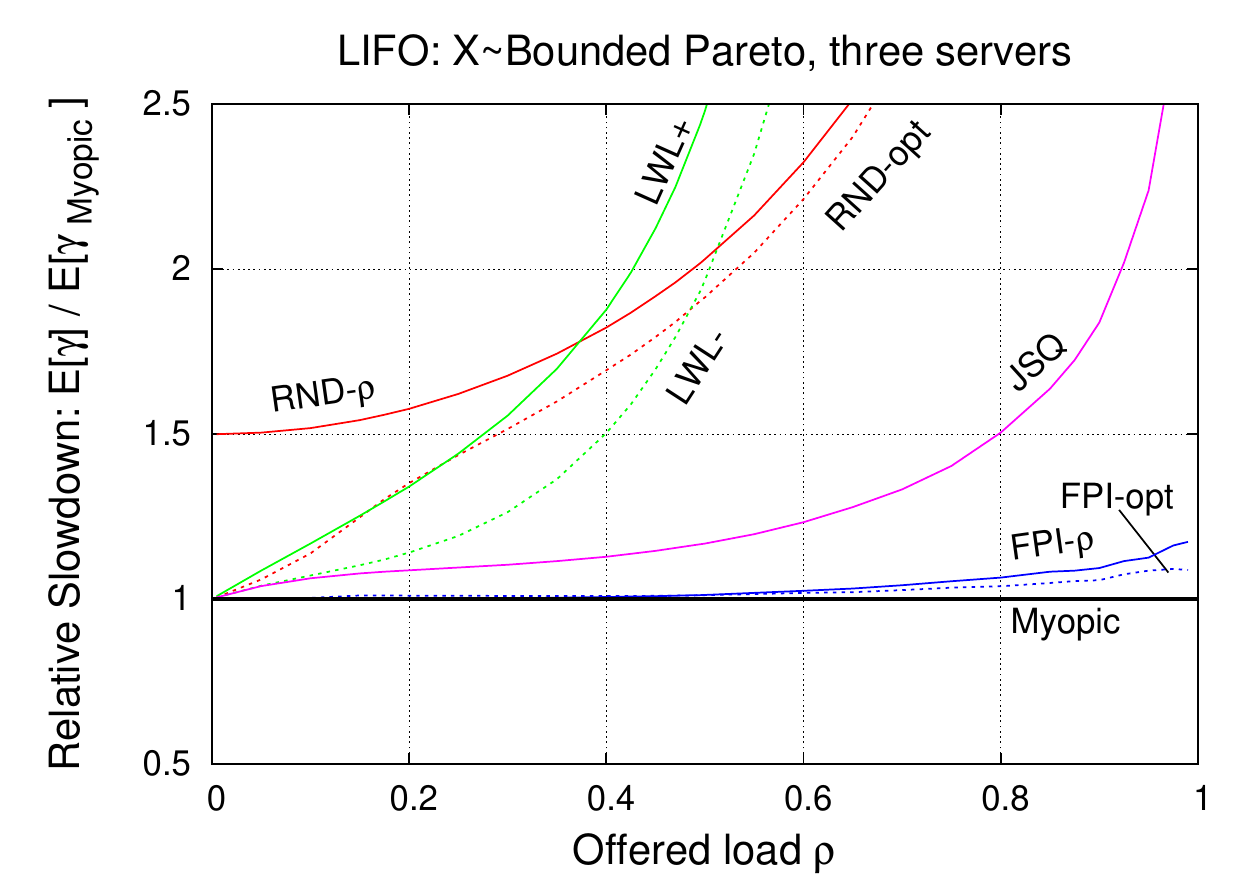}
\caption{Numerical results with preemptive LIFO scheduling discipline
  for two identical (left) servers with rates $\mathbf{\boldsymbol{\nu}_1=\boldsymbol{\nu}_2=1}$,
  and three heterogeneous servers with rates $\mathbf{\boldsymbol{\nu}_1=1}$ and 
  $\mathbf{\boldsymbol{\nu}_2=\boldsymbol{\nu}_3=1/2}$.}
\label{fig:plifo-sim}
\end{figure*}
The state-independent random policies are much worse than the other policies.
However, state-independent SITA-E again proves out to be better than the
state-dependent policies JSQ, LWL$^-$, LWL$^+$, Myopic, FPI-$\rho$ and FPI-opt.
The gain from switching the roles of the queues (SITA-E vs.\ SITA-Es)
is smaller in this case due to the fact that we may only switch
the roles of the two slower queues having the same service rate $\nu=0.5$.
As expected, the FPI-SITA-E policy yields a
significantly lower mean slowdown than any other policy
especially under a heavy load.

Based on the results, one can assume that the relative values
obtained for the random policies simply do not capture
the situation sufficiently well, while
SITA-E is already a good dispatching policy,
and FPI then leads to ``adaptive'' size intervals.

\subsection{Preemptive LIFO}

Next we assume that the servers are bound to operate under LIFO,
which, as mentioned, 
has a reasonably robust performance with respect to the mean slowdown.
Job sizes are again assumed to obey the bounded Pareto distribution.

\subsubsection*{Two identical servers}

Fig.~\ref{fig:plifo-sim} (left) illustrates the performance with respect to the slowdown
criterion for two identical servers. %
On $x$-axis is again the offered load $\rho$ and $y$-axis corresponds to the
relative mean slowdown (here comparison is against the Myopic policy).
One can identify three performance groups: RND and LWL form the worst performing group,
then come Round-Robin and JSQ, and
the Myopic and FPI-RND policy achieve the lowest mean slowdown.

\subsubsection*{Three heterogeneous servers}
Fig.~\ref{fig:plifo-sim} (right) illustrates the simulation results in the asymmetric setting
comprising one primary server with service rate $\nu_1=1$ and two secondary servers
with rates $\nu_2=\nu_3=1/2$. 
In this case, the Myopic approach is 
the optimal (among the candidates) while both FPI policies
attain almost identical mean slowdown. %
Even though the value function for the Myopic policy is not available for us,
one can estimate the relative values by means of simulation and carry out
the policy improvement numerically. %
However, in this paper we do not pursue into this direction.

\begin{figure}
\centering
\includegraphics[width=72mm]{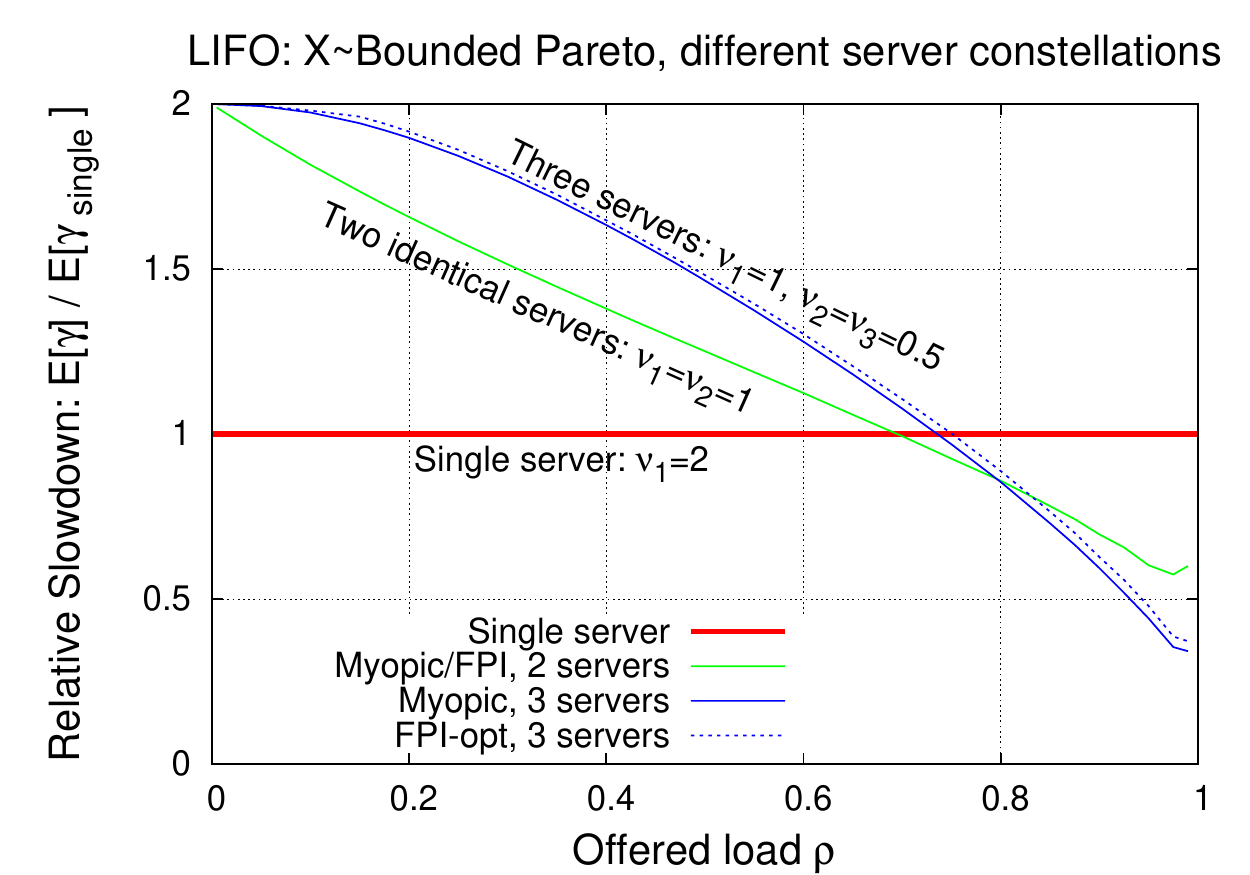}
\caption{Comparison of the chosen server constellations with an equal total capacity: 
  i) single server, ii) two identical servers, and iii) three servers.} %
\label{fig:plifo-comparison}
\end{figure}

\begin{figure*}
\centering
\begin{tabular}{c@{}c@{}c}
\includegraphics[width=67mm]{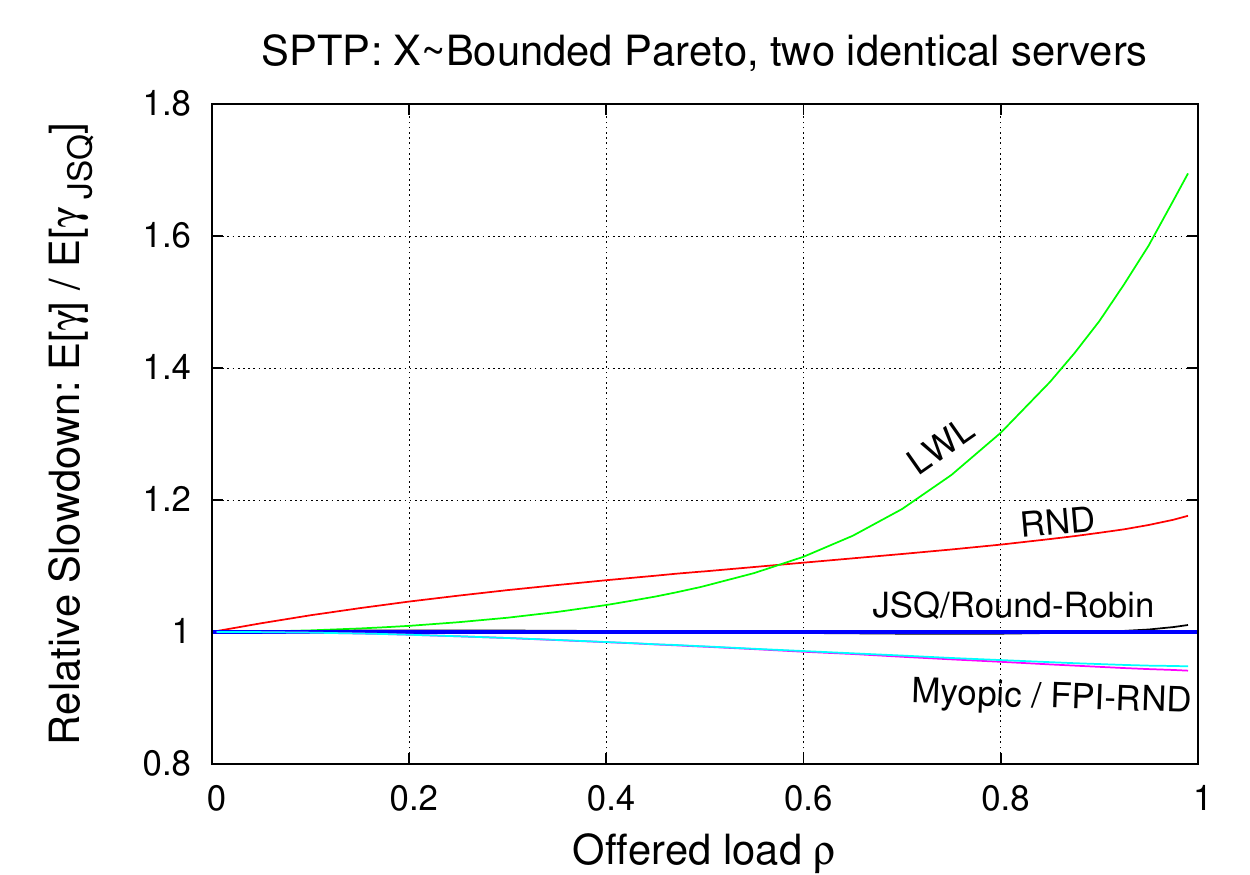}
& \hspace{26mm} &
\includegraphics[width=67mm]{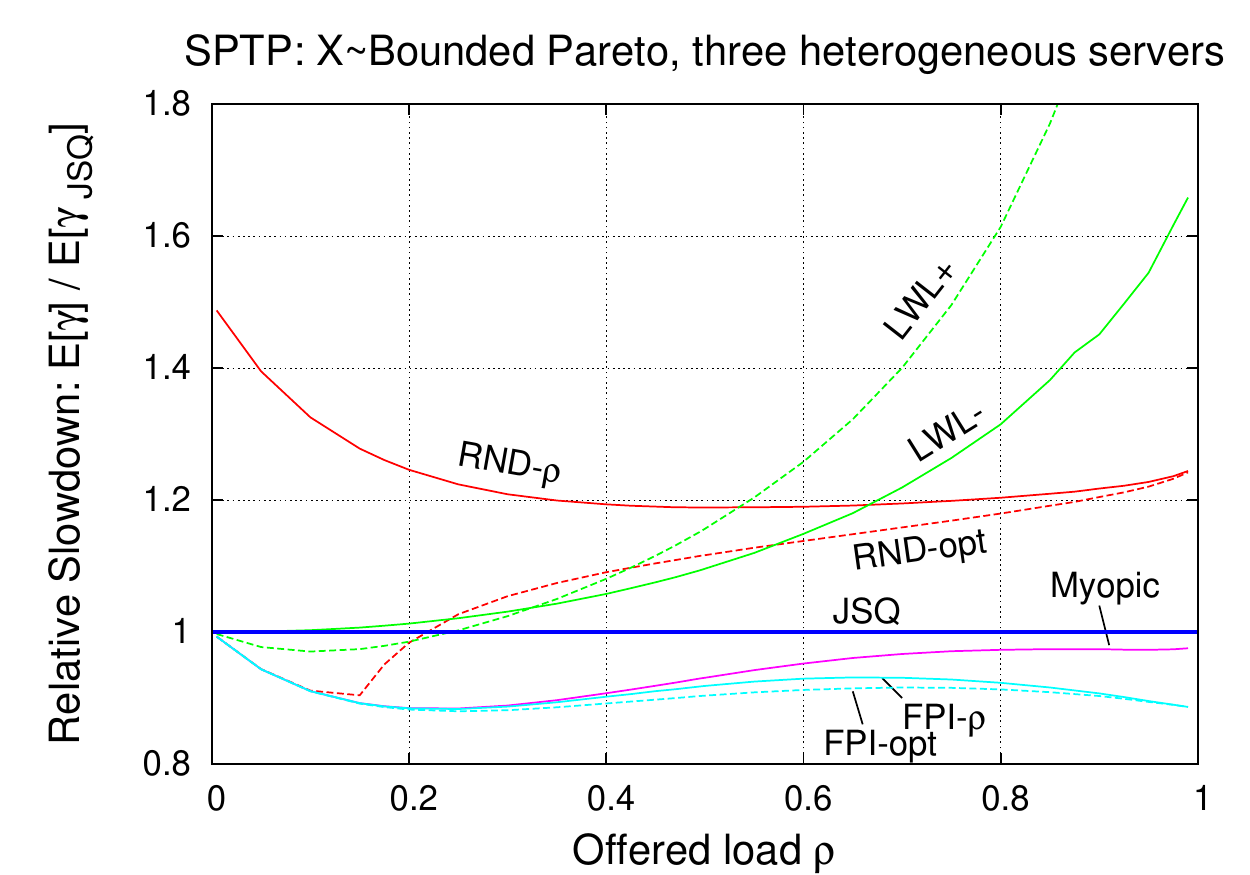}\\
\end{tabular}
\caption{Numerical results with SPTP for two identical (left)
  and three heterogeneous servers (right).}
\label{fig:sptp-sim-3}
\end{figure*}

\subsubsection*{Server constellations}
In Fig.~\ref{fig:plifo-comparison}
we compare the mean slowdown
between different server constellations
assuming the preemptive LIFO scheduling discipline.
The $y$-axis represents the relative performance when compared to the two server
system.
To no surprise, 
when the load is small or moderate, a single server system achieves the lowest mean slowdown.
However, as the load increases more, first the two server system becomes optimal,
and then eventually the three server system takes the lead.
This is due to the fact that at higher load levels, the correct dispatching decisions
allow one to serve more expensive jobs faster.
One interesting research question indeed is
that given a total capacity budget, what is the optimal
server constellation so as to minimize the mean slowdown.

\subsection{SPTP}

Finally, let us consider the SPTP scheduling discipline that is the natural
choice when minimizing the mean slowdown (see Section~\ref{sec:mg1-slow}). %
The job sizes are again assumed to obey the bounded Pareto distribution.
Fig.~\ref{fig:sptp-sim-3} (left) illustrates the slowdown
performance with the two identical servers.
Interestingly, LWL becomes even weaker than RND-U as the offered load increases.
Myopic and FPI-RND achieve the lowest mean slowdown.

\begin{figure*}
  \centering
  \includegraphics[width=67mm]{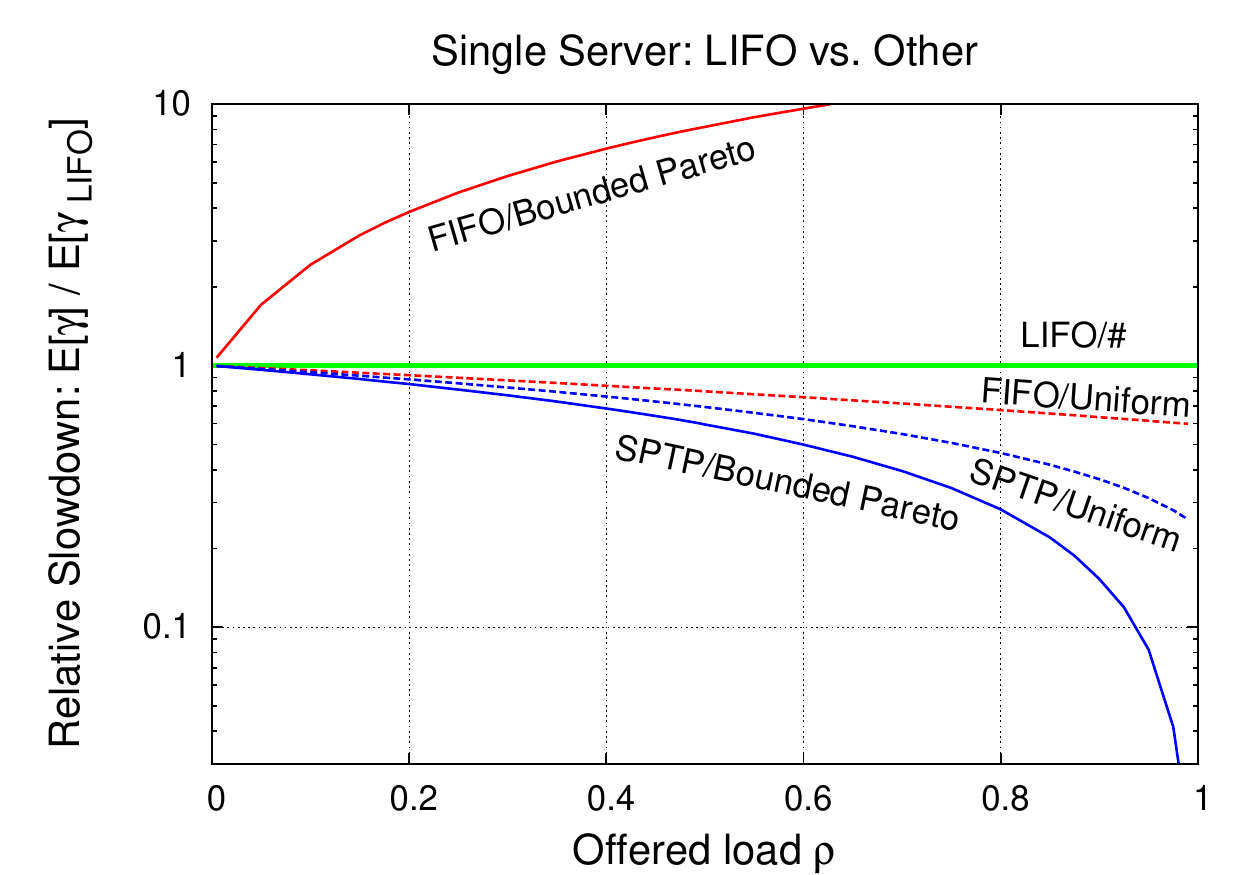}
  \hspace{26mm}
  \includegraphics[width=67mm]{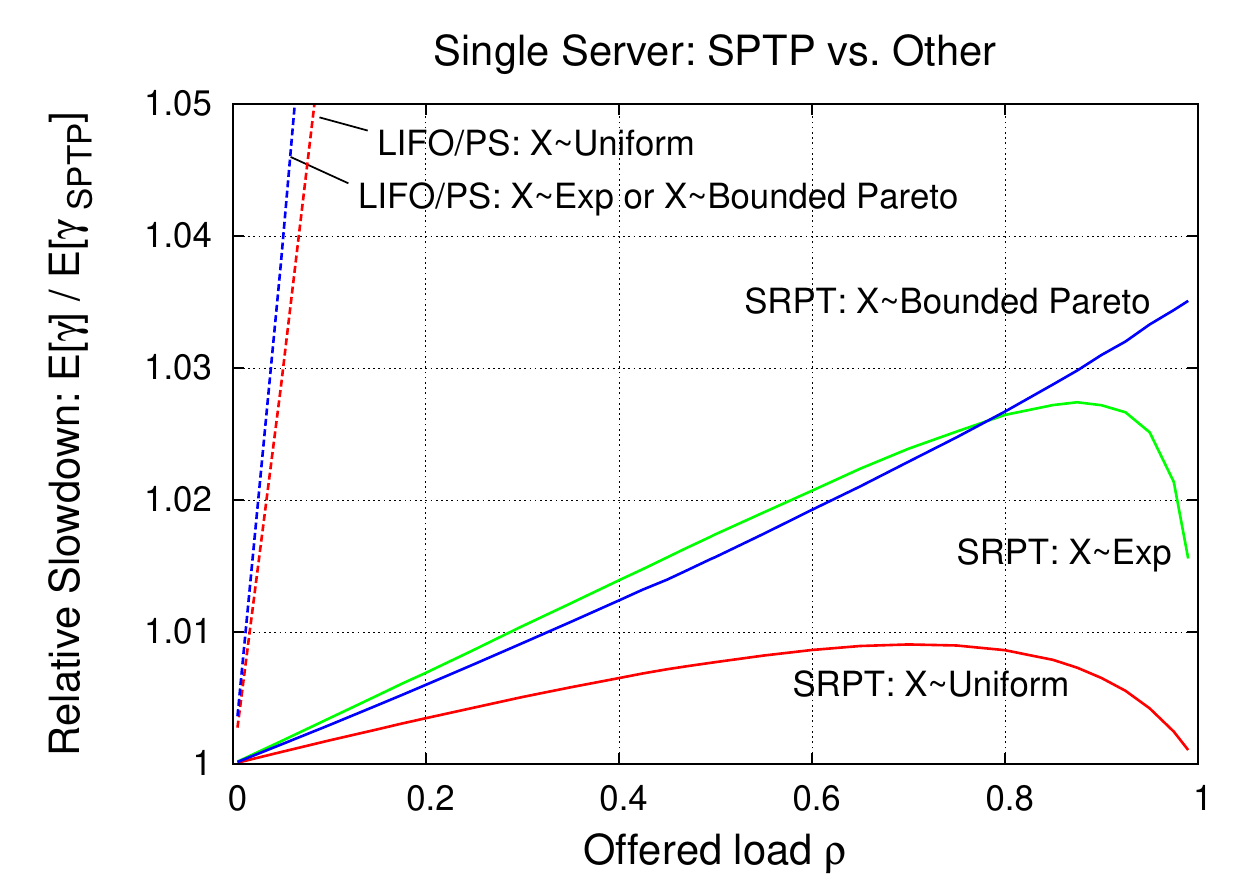}
  \caption{Comparison of different scheduling disciplines in a single server queue.
    FIFO is better than LIFO with some job size distributions (left).
    SPTP is only marginally better than SRPT (right).}
  \label{fig:comparisons}
\end{figure*}

In the case of three heterogeneous servers, the situation is again more challenging
and the numerical results are depicted in Fig.~\ref{fig:sptp-sim-3} (right).
At low levels of load, RND-opt is a surprisingly good choice,
suggesting that SPTP manages to locally ``correct'' the occasional suboptimal decisions. 
When the load increases, the LWL policies become weak again.
The Myopic policy shows a robust and good performance at all levels of offered load.
However, the FPI policies, and FPI-opt in particular, achieve the lowest mean slowdown which
is significantly better than with any other policy when $\rho > 0.5$.

\subsection{Comparison of scheduling disciplines}

\begin{figure*}
  \centering
  \begin{tabular}{c@{}c@{}c}
  \includegraphics[width=75mm]{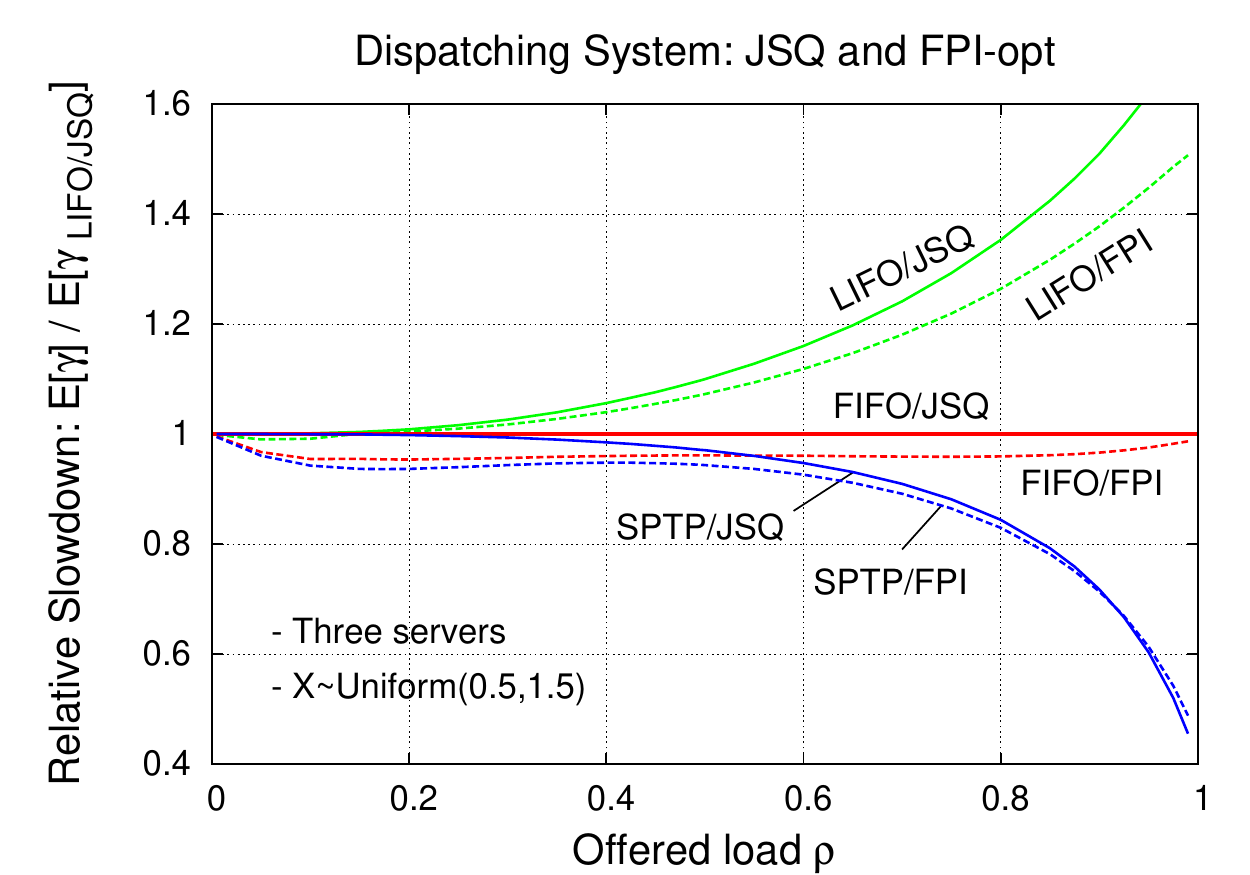} &
  \hspace{12mm} &
  \includegraphics[width=75mm]{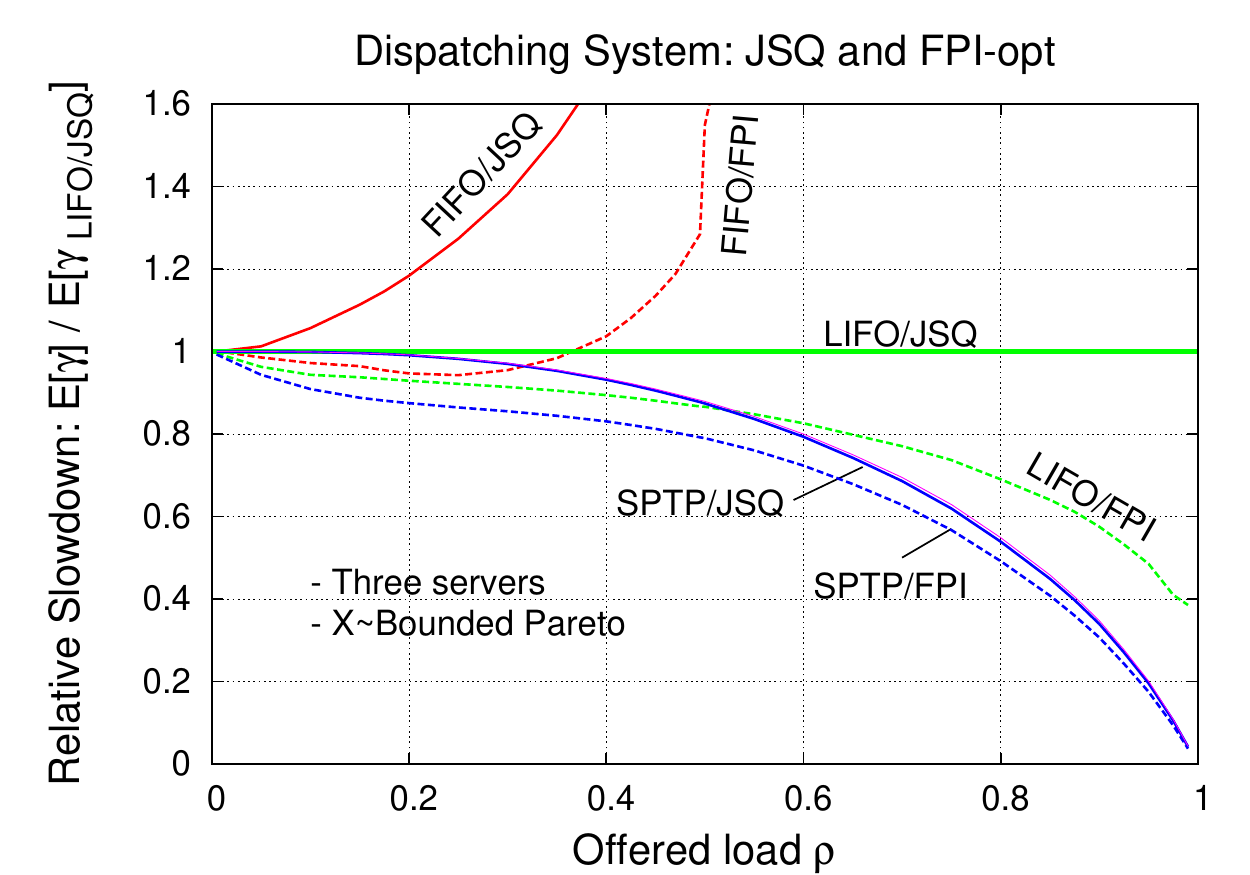} \\
  \small (a) $\JS \sim \mathrm{U}(0.5,1.5)$ &&
  \small (b) $\JS \sim \text{Bounded Pareto}$
  \end{tabular}
  \caption{Heterogeneous dispatching system with different scheduling disciplines.}
  \label{fig:dispatching-3-vertailu}
\end{figure*}

So far we have fixed a scheduling discipline and evaluated different dispatching policies.
Here we give a brief comparison of 
different scheduling disciplines with respect to the slowdown
metric. Job sizes obey either (a) uniform distribution, $Y\sim\mathrm{U}(0.5,1.5)$,
or (b) the bounded Pareto distribution, both having the same mean, $\E{Y}=1$.

First we consider a \emph{single server queue}. %
Fig.~\ref{fig:comparisons} (left) depicts a comparison
against LIFO with the same job size distribution on the logarithmic scale.
In accordance with \eqref{eq:fifo-vs-lifo}, 
FIFO is better %
than LIFO with uniform job size distribution,
and with the bounded Pareto distribution the situation is the opposite. 
SPTP is clearly the best in both cases. 
In Fig.~\ref{fig:comparisons} (right), the reference level is the mean slowdown with SPTP.
Here we have included also SRPT, which %
turns out to be only marginally better than SRPT,
as observed also in \cite{wierman-sigmetrics-2005}.
LIFO and PS are significantly worse.

Fig.~\ref{fig:dispatching-3-vertailu} illustrates the relative performance between 
different %
scheduling disciplines and 
dispatching policies in the heterogeneous three server dispatching system.
SRPT is not included as its performance is very similar to SPTP.
The reference level is the mean slowdown a JSQ/FIFO (left)
and JSQ/LIFO (right) %
achieve.
FPI manages to outperform the corresponding JSQ policy in all cases.
Especially with the bounded Pareto distributed job sizes, 
FPI/LIFO performs rather well when compared to JSQ/LIFO.
SPTP is clearly superior scheduling discipline in both cases, as expected.

In general, the scheduling discipline seems to have a stronger influence to the performance
than the dispatching policy.

\section{Conclusions}
\label{sec:conclusions}

This work generalizes the earlier results of the size-aware
value functions with respect to the sojourn time for M/G/1
queues with FIFO, LIFO, SPT and SRPT disciplines to a general
job specific holding cost. 
Additionally, we have considered the SPTP queueing discipline, 
the optimality of which with Poisson arrivals was also established.
As an important special case, one obtains the \emph{slowdown criterion},
where the holding cost is inversely proportional to the job's (original) size.
These results were then utilized in developing robust policies
for dispatching systems.
In particular, the value functions enable the policy improvement step for
an arbitrary state-independent dispatching policy such as Bernoulli-splitting.
The derived dispatching policies were also shown to perform well
by means of numerical simulations. In particular, the highest improvements
were often obtained in a more challenging heterogeneous setting with
unequal service rates. Also the SPTP scheduling discipline outperformed
the other by a clear margin in the examples, except the SRPT discipline,
which appears to offer a rather similar performance in terms of slowdown
and sojourn time.

\bibliographystyle{abbrv}

%

\appendix

\section{Value Function for SPT and SRPT}
Carrying out the similar steps as with SPTP, 
it is straightforward to derive value function
also in the case of SPT and SRPT policies.
Again, we let
$f(x)$ denote the pdf of the job size distribution and
$\rho(x)$ the offered load due to jobs shorter than $x$
according to \eqref{eq:rho-x}.
The backlog due to higher priority work in this case is
$u_{\bz}(x)$, i.e., where the priority index for SPT is the initial
service time $\Delta_i^*$,
and for SRPT the remaining service time $\Delta_i$.

\subsection{M/G/1-SPT}
We consider the non-preemptive SPT, which is the optimal
non-preemptive schedule with respect to both the mean sojourn time
and the mean slowdown.
Thus, the remaining service time of a queueing
job is equal to the initial service time.
Therefore, a sufficient state description for
a non-preemptive M/G/1-SPT queue with
arbitrary holding costs is
$\bz=((\Delta_1,a_1);\ldots;(\Delta_n,a_n))$,
where job $1$ (if any) is currently receiving service
and jobs $2,\ldots,n$ are waiting in the queue.
Without lack of generality, we can assume that
$\Delta_2<\Delta_3<\ldots<\Delta_n$.

\begin{prop}%
For the size-aware value function
with respect to arbitrary holding costs
in an non-preemptive M/G/1-SPT queue it holds that,
\begin{equation}\label{eq:vn-spt}\small
\begin{array}{l}
v_{\bz} %
-v_0 = 
\dps \sum_{i=1}^n \hc_i\left(\Delta_i + \frac{\sum_{j=1}^{i-1} \Delta_j}{1-\rho(\Delta_i)} \right) \; +
\\[6mm]\dps
 \frac{\lambda}{2} \sum_{i=1}^{n}
\left( \left(\biggl(\sum_{j=1}^i \Delta_j\biggr)^2 {+} \sum_{j=i+1}^{n} (\Delta_j)^2\right)
\hspace*{-2mm}\int\limits_{\tilde\Delta_{i}}^{\tilde\Delta_{i+1}}\hspace*{-3mm}
  \frac{\hc(x)\,f(x)}{(1{-}\rho(x))^2}\,dx \right),
\end{array}
\end{equation}
where $\hc(x)=\cE{\HC}{\JS{=}x}$ and %
the integration intervals are
$$
\tilde\Delta_i = \left\{
\begin{array}{ll}
0, & i=1\\
\Delta_i, & i=2,\ldots,n\\
\infty, & i=n+1.
\end{array}\right.
$$
\end{prop}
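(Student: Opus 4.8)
The plan is to mirror the derivation of the SPTP value function \eqref{eq:vn-sptp}. I couple two systems driven by the same Poisson arrival stream: System~1 started in state $\bz$ and System~2 started empty. The two coincide once System~1 has drained all work that is ``extra'' relative to System~2, so $v_{\bz}-v_0=h_1+h_2$, where $h_1$ is the holding cost accrued by the $n$ jobs initially present only in System~1 and $h_2$ is the additional holding cost incurred by the later arrivals in System~1 because of that extra work. The first line of \eqref{eq:vn-spt} will be $h_1$ and the double sum will be $h_2$.

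For $h_1$: because SPT is non-preemptive, job $i$'s remaining service equals its size $\Delta_i$, and before it starts it must wait out the initially higher-priority-or-in-service work $\sum_{j=1}^{i-1}\Delta_j$ (the residual of the job in service plus the waiting jobs shorter than $\Delta_i$). As in the classical M/G/1--SPT analysis, this block triggers a sub-busy period fed only by subsequent arrivals of size $<\Delta_i$, which inflates it by the factor $1/(1-\rho(\Delta_i))$; job $i$ then receives its own uninterrupted service $\Delta_i$. Hence its mean remaining sojourn time is $\Delta_i+\sum_{j=1}^{i-1}\Delta_j/(1-\rho(\Delta_i))$, and weighting by $\hc_i$ and summing over $i$ gives the first line of \eqref{eq:vn-spt}.

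For $h_2$: as for SPTP, the extra delay of a later arrival of size $x$ is confined to its initial waiting (once it starts service it runs to completion, identically in both systems), and summing these extra delays amounts to integrating over arrival time the difference between the two systems in the higher-priority-or-in-service workload seen by a size-$x$ job (``area $A$''), then weighting by $\lambda f(x)\,dx$, by the mean holding cost $\hc(x)=\cE{\HC}{\JS=x}$ of such a job, and by an extra $1/(1-\rho(x))$ for the size-$<x$ sub-busy period that the extra workload itself generates. Under non-preemptive SPT, area~$A$ consists of: one triangle of base $u_{\bz}(x)=\sum_{j:\Delta_j<x}\Delta_j$ (the extra initial higher-priority backlog, which includes the residual of the job in service) and --- this is the structural difference from SPTP --- one triangle of base $\Delta_j$ for every initially \emph{lower}-priority job $j$ (those with $\Delta_j>x$), since once such a job enters service it blocks the tagged $x$-arrival for its entire residual duration. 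Each base-$y$ triangle together with the $\mathrm{Poisson}(\lambda(x)y)$ mini busy periods it carries has mean total area $y^2/(2(1-\rho(x)))$ (the triangle contributing $y^2/2$; the rectangles having mean height $y/2$ and mean width $m(x)/(1-\rho(x))$). On each interval $x\in(\tilde\Delta_i,\tilde\Delta_{i+1})$ one has $u_{\bz}(x)=\sum_{j=1}^i\Delta_j$ and $\sum_{j:\Delta_j>x}\Delta_j^2=\sum_{j=i+1}^n\Delta_j^2$, both constant, so integrating piece by piece over the $n+1$ intervals produces the double sum of \eqref{eq:vn-spt}; adding $h_1$ finishes the proof.

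The sub-busy-period inflation factors and the interval bookkeeping via the $\tilde\Delta_i$ are routine. The main obstacle is making the area-$A$ argument rigorous under non-preemption: one must check that, in the ``virtual time'' obtained by excising the size-$<x$ mini busy periods, the workload difference $U_{\bz}-U_0$ really is this collection of triangles --- in particular that an initially lower-priority job contributes a full base-$\Delta_j$ triangle rather than a truncated one --- and that the shorter jobs, which SPT serves first, are accounted for entirely by the mini-busy-period rectangles so that nothing is double-counted.
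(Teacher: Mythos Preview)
Your proposal is correct and follows essentially the same route as the paper: the same two-system coupling, the same remaining-sojourn-time formula for $h_1$, and the same ``area $A$'' decomposition for $h_2$ into one phase of initial backlog $u_{\bz}(x)$ followed by one phase of backlog $\Delta_j$ for each waiting job with $\Delta_j>x$, each phase contributing mean area $y^2/(2(1-\rho(x)))$. Two cosmetic slips: there are $n$ integration intervals here, not $n{+}1$ (the SPT endpoints start at $\tilde\Delta_1=0$, unlike SPTP's $\tilde\Delta_0=0$), and your shorthand $u_{\bz}(x)=\sum_{j:\Delta_j<x}\Delta_j$ is only right once you force job~1 into the sum regardless of $\Delta_1$, as your parenthetical already notes.
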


\begin{proof}
Similarly as with the SPTP,
we compare two systems:
System 1 initially in state $\bz$ and System 2 initially empty.
The two systems behave identically once System 1 becomes empty
for the first time.
We can write $v_{\bz}-v_0 = h_1+h_2$, where
$h_1$ is the cost the $n$ jobs initially present (only) in System 1
incur, and $h_2$ the difference in cost the later arriving
customers incur between the two systems.

First, the remaining sojourn time of job $i$ in a non-preemptive M/G/1-SPT queue
is
\begin{equation}\label{eq:spt-virtual}
\E{R_i} = \Delta_i + \frac{\sum_{j=1}^{i-1} \Delta_j}{1-\rho(\Delta_i)},
\end{equation}
which holds for all $n$ jobs present only in System 1.
Therefore, their contribution to the $v_{\bz}-v_0$ is
$$
h_1 = \sum_{i=1}^n \hc_i \left( \Delta_i + \frac{\sum_{j=1}^{i-1} \Delta_j}{1-\rho(\Delta_i)} \right).
$$

For the later arriving jobs,
we condition the derivation on job sizes $(x,x+dx)$,
which arrive at rate of $\lambda\,f(x)\,dx$.
Let $\tilde{t}(x)$ denote the mean additional sojourn time such
jobs experience in System 1 when compared to System 2.
Let $\hc(x)=\cE{\HC}{\JS{=}x}$ so that we have
an expression for $h_2$
$$
h_2 = \int_0^\infty \tilde{t}(x)\cdot \hc(x)\,dx.
$$
Instead of considering actual arrivals, we focus on a rate at
which \emph{virtual costs} are accrued in order to find $\tilde{t}(x)$.
With aid of \eqref{eq:spt-virtual}, one can deduce that 
the virtual cost rate is equal to
$$
\lambda f(x) \, \frac{U_{\bz}(x,t)}{1-\rho(x)},
$$
where $U_{\bz}(x,t)$ denotes the amount of work at time $t$
that would be processed before an arriving size $x$ job 
when a system was initially in state $\bz$.
In particular, we can write
$$
\tilde{t}(x) = \E{ \int_0^\infty \lambda\,f(x)
  \left( \frac{U_{\bz}(x,t)}{1-\rho(x)} - \frac{U_{0}(x,t)}{1-\rho(x)} \right)\,dt },
$$
which gives
$$
\tilde{t}(x) = \frac{\lambda\,f(x)}{1-\rho(x)}\, \E{ \int_0^\infty U_{\bz}(x,t) - U_{\bz}(x,t) \,dt }.
$$

Let $k_{\bz}(x)$ denote the number of jobs waiting in the queue with
a remaining service time greater than $x$.
$$
k_{\bz}(x) = | \{ i\in\{2,\ldots,n\} \,:\,\Delta_i>x \} |,
$$
Due to the non-preemptive discipline, the evolution of 
$U_{\bz}(x,t)$ for $\bz\ne 0$ consists of $k_{\bz}(x)+1$ phases.
Let $m=n-k_{\bz}(x)$.
During the first phase jobs $1,\ldots,m$ are served
and the initial backlog size $x$ jobs see is $\Delta_1+\ldots+\Delta_m$.
The second phase starts when job $m+1$ enters the server
and cannot be preempted,
thus creating an initial backlog $\Delta_{m+1}$ for size $x$ jobs,
etc.

\begin{figure}
\includegraphics[width=\linewidth]{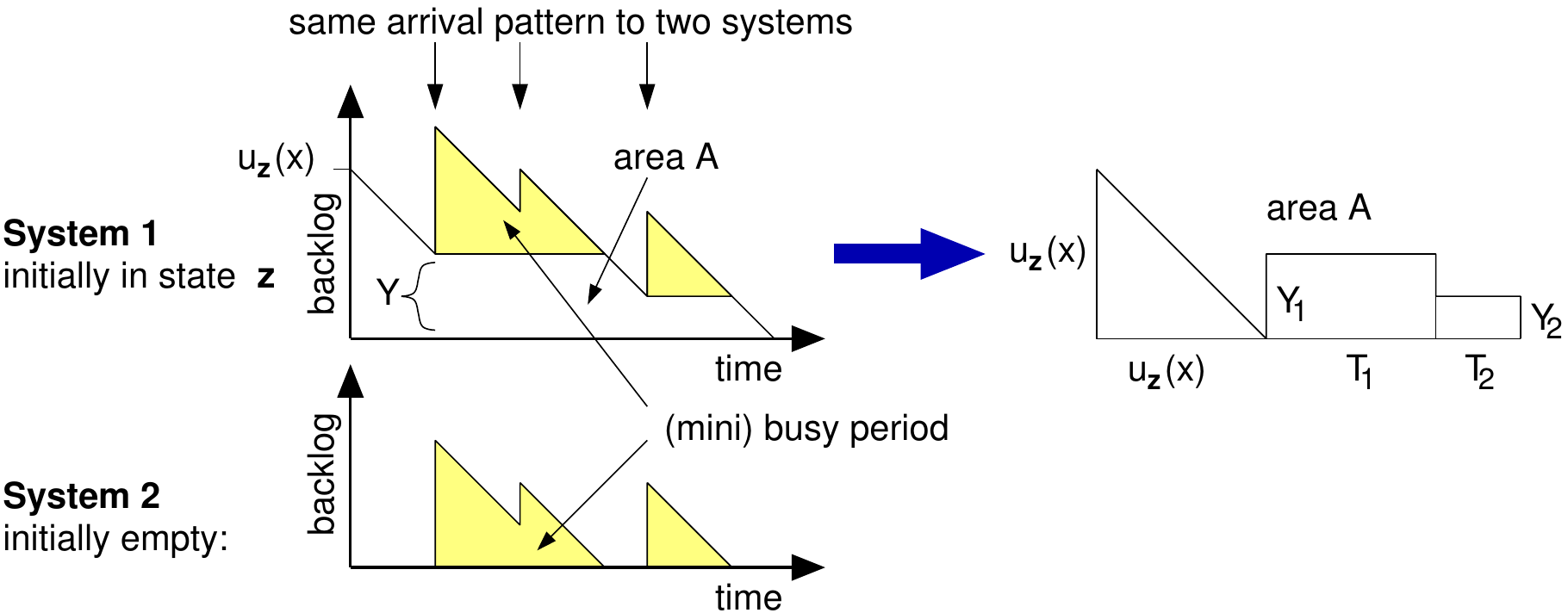}
\caption{Derivation of value function for an M/G/1-SPT queue.}
\label{fig:vn-spt}
\end{figure}

Considering arrival sample paths, one can deduce that the difference
$\int_0^\infty U_{\bz}(x,t) - U_{\bz}(x,t)\,dt$ corresponds to the white
marked region in Fig.~\ref{fig:vn-spt}, which consists of $k_{\bz}(x)+1$ statistically independent
areas.
The area of a phase starting with an initial backlog of $u$ from 
size $x$ customers' point of view is given by
\begin{equation}\label{eq:spt-vn-apu}
\frac{u^2}{2} + \lambda\,F(x)\,u \cdot \frac{\cE{\JS}{\JS<x}}{1-\rho}\cdot \frac{u}{2}
 = \frac{u^2}{2(1-\rho(x))}.
\end{equation}
Defining
$$
g(x)=\frac{\hc(x)\,f(x)}{(1-\rho(x))^2},
$$
then gives
\begin{align*}
h_2 &= \frac{\lambda}{2} ( \Delta_1^2 + \ldots + \Delta_n^2 ) \int_0^{\Delta_{2}} g(x)\,dx\\
 &\,+\,
\frac{\lambda}{2} ( (\Delta_1 + \Delta_2)^2 + \Delta_3^2 + \ldots + \Delta_n^2 ) \int_{\Delta_{2}}^{\Delta_{3}} g(x)\,dx + \ldots \\
 &\,+\,
\frac{\lambda}{2} ( \Delta_1 + \ldots + \Delta_n)^2 \int_{\Delta_{n}}^{\infty} g(x)\,dx.
\end{align*}
For example, when $x>\Delta_n$, all current $n$ jobs in System 1 have a higher
priority constituting a single phase with $u=\Delta_1+\ldots+\Delta_n$, etc.
Next define the integration intervals,
$\tilde\Delta_1=0$,
$\tilde\Delta_i=\Delta_i$ for $i=2,\ldots,n$, and
$\tilde\Delta_{n+1}=\infty$.
Substituting these into the above 
gives
\begin{equation*}%
\small
h_2 =
 \frac{\lambda}{2} \sum_{i=1}^{n}
 \left( \bigl(\sum_{j=1}^{i} \Delta_j\bigr)^2 {+} \sum_{j=i+1}^n \Delta_j^2\bigr)^2\right)
   \int\limits_{\tilde\Delta_{i}}^{\tilde\Delta_{i+1}}\hspace*{-3mm} \frac{\hc(x)\,f(x)}{(1-\rho(x))^2}\,dx,
\end{equation*}
which completes the proof.
\end{proof}

\subsection{M/G/1-SRPT}
While SPT was optimal in the class of non-preemptive schedules,
the SRPT discipline is the optimal preemptive schedule \cite{schrage-or-1968}.
For SRPT, a sufficient
state description is $\bz=((\Delta_1,a_1);\ldots;(\Delta_n,a_n))$,
where, without loss of generality, we again can assume that job $1$
is currently receiving service (if any)
and $\Delta_1<\Delta_2<\ldots<\Delta_n$.
For the total higher priority workload
we have $u_{\bz}(\Delta_i) = \sum_{j=1}^{i-1} \Delta_j$.

\begin{prop}%
\label{prop:vn-mg1-srpt}
For the size-aware value function 
in an M/G/1-SRPT queue with arbitrary holding cost
it holds that
\begin{equation}\label{eq:vn-srpt}\small
\begin{array}{l@{\,}l}
v_{\bz} - v_0 &= 
\dps 
 \sum_{i=1}^{n} \hc_i \left( \frac{\sum_{j=1}^{i-1} \Delta_j}{1{-}\rho(\Delta_i)} +
\int_0^{\Delta_i} \frac{1}{1{-}\rho(t)} \,dt\right) 
\\[6mm]&\dps
 + \;
\frac{\lambda}{2}\sum_{i=0}^n \biggl[ (n-i)
\int_{\tilde\Delta_i}^{\tilde\Delta_{i+1}} \frac{ x^2\,\hc(x)\,f(x)}{(1-\rho(x))^2}\,dx
\\[6mm]&\qquad\dps
 +\; \left(\sum_{j=1}^{i-1}\Delta_j \right)
 \int_{\tilde\Delta_i}^{\tilde\Delta_{i+1}} \frac{ \hc(x)\,f(x)}{(1-\rho(x))^2}\,dx
\biggr]
\end{array}
\end{equation}
where $\hc(x)=\cE{\HC}{\JS=x}$ and %
$$
\tilde\Delta_i = \left\{
\begin{array}{ll}
0, & i=0 \\
\Delta_i, & i=1,\ldots,n\\
\infty, & i=n+1.
\end{array}\right.
$$
\end{prop}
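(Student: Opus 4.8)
The plan is to mimic the coupling argument already used above for SPTP and SPT. I would run two statistically identical copies of the queue driven by the \emph{same} Poisson arrival stream and the same service requirements: System~1 started in state $\bz$ and System~2 started empty. Under SRPT the two systems evolve identically from the first instant at which System~1 becomes empty onwards, so the relative value splits as $v_{\bz}-v_0 = h_1+h_2$, where $h_1$ is the total holding cost charged to the $n$ jobs that are present only in System~1, and $h_2$ is the extra holding cost that the later arrivals accumulate in System~1 compared with System~2.

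For $h_1$ I would first establish the mean remaining sojourn time $\E{R_i}$ of initial job $i$ in an M/G/1-SRPT queue (the SRPT analogue of the Lemma leading to \eqref{eq:ER-sptp}, essentially recorded in \cite{hyytia-ejor-2012}). With the jobs ordered so that $\Delta_1<\cdots<\Delta_n$, job $i$ first waits out the higher-priority initial backlog $\sum_{j=1}^{i-1}\Delta_j$, during which only arrivals of size $<\Delta_i$ can interrupt, inflating the wait by the factor $1/(1-\rho(\Delta_i))$; thereafter its own residual work decreases through every level $t\in(0,\Delta_i)$, and at level $t$ it is preempted precisely by arrivals of size $<t$, which contributes $\int_0^{\Delta_i} dt/(1-\rho(t))$. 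Multiplying $\E{R_i}=\sum_{j=1}^{i-1}\Delta_j/(1-\rho(\Delta_i))+\int_0^{\Delta_i}dt/(1-\rho(t))$ by the holding rate $\hc_i$ and summing over $i$ yields the first sum in the claim.

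For $h_2$ I would condition on the size $x$ of a future arrival (rate $\lambda f(x)\,dx$, conditional mean holding rate $\hc(x)=\cE{\HC}{\JS=x}$). Since a size-$x$ job has distributionally the same remaining sojourn time in both systems once it first enters service, the only discrepancy is its initial waiting time, with mean $\E{U_{\bz}(x,t)-U_0(x,t)}/(1-\rho(x))$, where $U_{\bz}(x,t)$ is the workload of priority higher than $x$ seen at time $t$. Integrating over $t$ and $x$ gives $h_2=\tfrac{\lambda}{2}\int_0^\infty \hc(x)f(x)(1-\rho(x))^{-2}A(x)\,dx$, and the heart of the proof is to read off $A(x)$ from a picture analogous to Fig.~\ref{fig:vn-sptp}: in the time axis compressed by deleting the mini busy periods of sub-$x$ arrivals, the excess region decomposes into statistically independent ``triangle $+$ geometric train of rectangles'' blocks — one block of triangle-height $u_{\bz}(x)=\sum_{j:\Delta_j<x}\Delta_j$ from the initial higher-priority backlog, and one block of triangle-height exactly $x$ for each initial job $j$ with $\Delta_j>x$ (whose residual work eventually crosses $x$, after which it outranks the tagged job). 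Using that the number of intervening rectangles is Poisson with mean $\lambda(x)u$, their heights uniform on $(0,u)$ with mean $u/2$, and their widths i.i.d.\ M/G/1 busy periods with mean $m(x)/(1-\rho(x))$, each block of triangle-height $u$ has mean area $u^2/\bigl(2(1-\rho(x))\bigr)$ — the same elementary identity already used for SPTP and SPT. Hence the integrand carries $\bigl(\sum_{j:\Delta_j<x}\Delta_j\bigr)^2+\#\{j:\Delta_j>x\}\,x^2$; splitting $[0,\infty)$ at the breakpoints $\tilde\Delta_i=\Delta_i$, on each interval $(\tilde\Delta_i,\tilde\Delta_{i+1})$ one has $\#\{j:\Delta_j>x\}=n-i$ and a constant initial backlog, and summing the $n+1$ pieces produces the second double sum (with the index shift dictated by the convention $\tilde\Delta_i=\Delta_i$). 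Specialising $\hc\equiv 1$ (sojourn time) or $\hc(x)=1/x$ (slowdown) then recovers the respective value functions.

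The step I expect to be the main obstacle is the geometric bookkeeping for $A(x)$ under the preemptive dynamics: one has to justify that (i) jobs $1,\dots,i$ remain ahead of the tagged size-$x$ job for the entire time they are in the system, so that the initial backlog contributes a \emph{single} triangle rather than the multi-phase structure seen for non-preemptive SPT; (ii) each over-$x$ job contributes a triangle of height exactly $x$ (not $\Delta_j$); and (iii) the blocks are independent and the mini-busy-period count is Poisson, which is where the ``virtual cost rate'' / Poisson-thinning argument must be invoked with care, exactly as in the SPT proof. Once $A(x)$ is pinned down, the remaining manipulations — the $h_1$ formula and the partition into $n+1$ integrals — are routine.
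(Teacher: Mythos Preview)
Your proposal is correct and follows essentially the same route as the paper's own proof: the identical two-system coupling, the split $v_{\bz}-v_0=h_1+h_2$, the same $h_1$ via the SRPT residual-sojourn formula $\E{R_i}=\sum_{j<i}\Delta_j/(1-\rho(\Delta_i))+\int_0^{\Delta_i}dt/(1-\rho(t))$, and the same $h_2$ obtained by reading $\int_0^\infty(U_{\bz}(x,t)-U_0(x,t))\,dt$ as one $u_{\bz}(x)$-block plus $n_{\bz}(x)$ many $x$-blocks, each of mean area $u^2/(2(1-\rho(x)))$, followed by partitioning the $x$-integral at the $\Delta_i$. Your bookkeeping on $(\tilde\Delta_i,\tilde\Delta_{i+1})$ correctly yields $(\sum_{j=1}^{i}\Delta_j)^2$ and $n-i$, matching the paper's \emph{proof}; the ``index shift'' you allude to is in fact a typo in the displayed statement \eqref{eq:vn-srpt} (the second summand should read $(\sum_{j=1}^{i}\Delta_j)^2$, not $\sum_{j=1}^{i-1}\Delta_j$).
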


\begin{proof}
Again, 
we consider System 1 initially in state $\bz$ and System 2 initially empty,
and find expressions for $h_1$ and $h_2$
corresponding to
the cost the $n$ jobs initially present (only) in System 1
incur, and the difference in cost the later arriving
customers incur between the two systems. %

In a M/G/1-SRPT queue at state ${\bz}=(\Delta_1,\ldots,\Delta_n)$,
the mean remaining sojourn time of a job with a (remaining) size $\Delta$ and
an unfinished work $u_{\bz}(\Delta)$ ahead in the queue 
is given by \cite{hyytia-ejor-2012}
\begin{equation}\label{eq:et-srpt}
\E{R_{\bz}(\Delta)} = \frac{u_{\bz}(\Delta)}{1-\rho(\Delta)} +
\int_0^\Delta \frac{1}{1-\rho(t)}\, dt,
\end{equation}
which gives the expected cost
the current $n$ jobs\footnote{Note that we assume the opposite order than \cite{hyytia-ejor-2012}.}
present in System 1 incur,
$$
h_1 = \sum_{i=1}^{n} \hc_i \left(\frac{\sum_{j=1}^{i-1}\Delta_j}{1-\rho(\Delta_i)} +
\int_0^{\Delta_i} \frac{1}{1-\rho(t)}\, dt\right).
$$
For the later arriving jobs, %
we can again write
$$
\begin{array}{l@{\,}l}
\tilde{t}(x) &= 
\dps
\mathrm{E}\biggl[ \int_0^\infty \lambda\,f(x) \biggl[
\bigl(x + \frac{U_{\bz}(x,t)}{1-\rho(x)} + \int_0^x \frac{\rho(t)}{1-\rho(t)}\,dt\bigr)
\\
&\dps\qquad\qquad -\;
\left(x + \frac{U_{0}(x,t)}{1-\rho(x)} + \int_0^x \frac{\rho(t)}{1-\rho(t)}\,dt\right)
\biggr] \biggr],\\[4mm]
 &=\dps
 \frac{\lambda\,f(x)}{1-\rho(x)}\, \mathrm{E}\biggl[\int_0^\infty U_{\bz}(x,t) - U_{0}(x,t)\,dt\biggr],
\end{array}
$$
which describes the expected difference in the cumulative sojourn times between
System 1 and System 2 for size $x$ jobs.

\begin{figure}
\includegraphics[width=\linewidth]{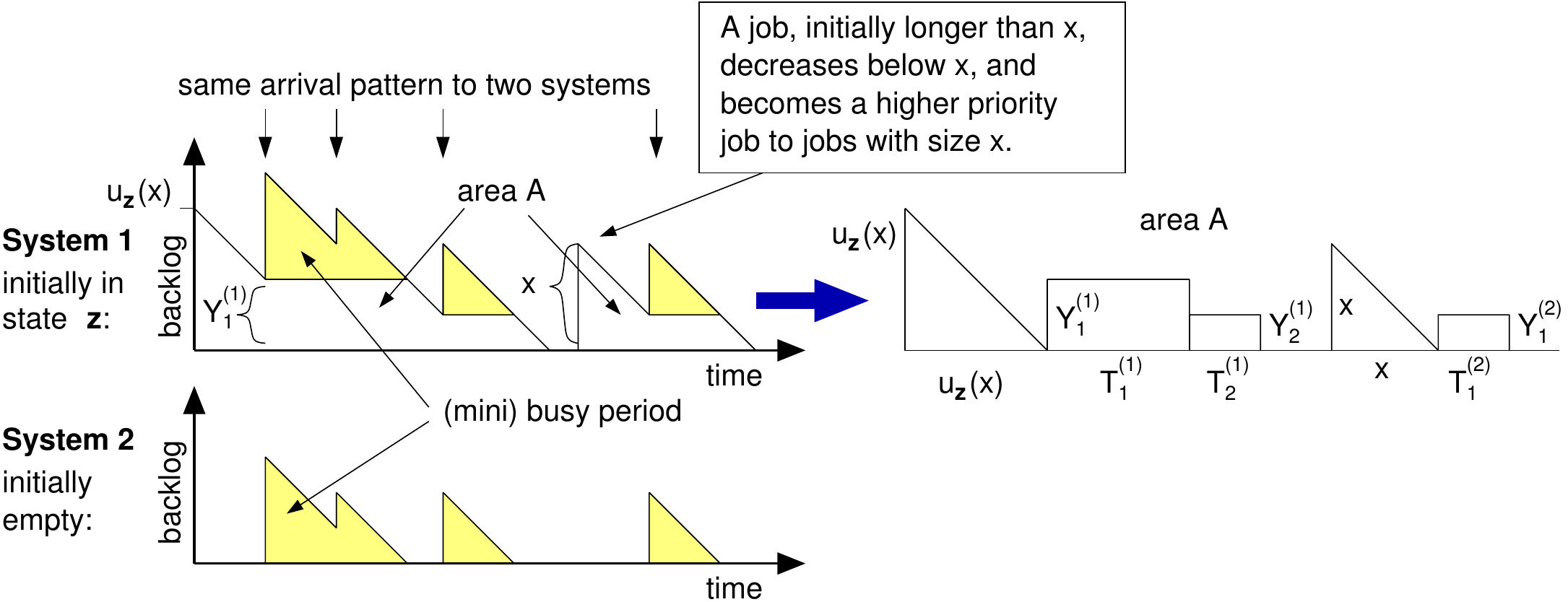}
\caption{Derivation of value function for an M/G/1-SRPT queue.}
\label{fig:vn-srpt}
\end{figure}

With SRPT also the jobs in System 1 initially longer than $x$
affect the result as eventually, one at a time, 
they decrease below the threshold $x$ and
trigger a new higher priority workload present only in System 1
(see Fig.~\ref{fig:vn-srpt}).
Consequently, 
we have one phase starting with a backlog of $u_{\bz}(x)$, and
then $n_{\bz}(x)$ phases starting with a backlog of $x$,
where $n_{\bz}(x)$ denotes the number of jobs longer than $x$ in state $\bz$.
In some sense, this is the same as with SPT, but the difference is the initial
backlog size $x$ jobs see: with SPT it is $\Delta_i$ instead of $x$ for later phases.
Hence, each of these phases reduce to \eqref{eq:spt-vn-apu}, which gives
$$
\tilde{t}(x) = \frac{\lambda\,f(x)\,\left[u_{\bz}(x)^2 + n_{\bz}(x)\,x^2\right]}{2 (1-\rho(x))^2},
$$
As $h_2 = \int \hc(x)\,\tilde{t}(x)\,dx$,
and both $u_{\bz}(x)$ and $n_{\bz}(x)$
are (monotonic) step functions having
jumps at $x=\Delta_1,\ldots,\Delta_n$,
we proceed by integrating in parts.
Defining
$$
g(x) = \frac{f(x)\,\hc(x)}{(1-\rho(x))^2},
$$
then gives
\begin{align*}
h_2 &= \frac{\lambda}{2} \int_0^{\Delta_1} n\,x^2 \, g(x)\,dx \\
& \quad +\; \frac{\lambda}{2} \int_{\Delta_1}^{\Delta_2} (\Delta_1^2 + (n-1)\,x^2) \, g(x)\,dx + \ldots \\
& \quad +\;
\frac{\lambda}{2} \int_{\Delta_n}^{\infty} (\Delta_1+\ldots+\Delta_n)^2 \, g(x)\,dx,
\end{align*}
and recalling that $v_{\bz}-v_0=h_1+h_2$ then gives \eqref{eq:vn-srpt}.
\end{proof}

\end{document}